\documentclass[%
 aps,
  pra,
  11pt,
  onecolumn,
  notitlepage,
  superscriptaddress,
  amsmath,amssymb
]{revtex4-2}
\usepackage[dvipsnames]{xcolor}
\usepackage[normalem]{ulem}
\usepackage{graphicx}
\usepackage{dcolumn}
\usepackage{bm}
\usepackage{hyperref}
\hypersetup{colorlinks = true, linkcolor = [rgb]{0.19411,0.51882,0.667058}, urlcolor = [rgb]{0.125490,0.29542,0.1647058}, citecolor = [rgb]{0.75882,0.37411,0.14117}}
\usepackage{cleveref}
\usepackage{tcolorbox}
\usepackage{setspace}
\usepackage{stmaryrd}
\usepackage{amssymb}
\usepackage{braket}

\graphicspath{{../images/}}
\usepackage[toc,page]{appendix}
\usepackage{mathtools}
\usepackage{xcolor}

\newtheorem{theorem}{Theorem}

\newtheorem{corollary}{Corollary}

\newtheorem{definition}{Definition}

\newtheorem{proposition}{Proposition}
\newtheorem{remark}{Remark}

\newenvironment{proof}[1][Proof]{\noindent\textbf{#1.} }{\ \rule{0.5em}{0.5em}}

\usepackage{braket}

\def\>{\rangle}
\def\<{\langle}

\usepackage[bottom]{footmisc}
\allowdisplaybreaks

\begin{document}

\title{Efficiently computable bounds on the energy-constrained quantum reading capacity}

\author{Vishal Singh}
\email{vishalsph04@gmail.com}
\affiliation{Mathematical Quantum Information RIKEN Hakubi Research Team, 
RIKEN Pioneering Research Institute (PRI) and RIKEN Center for Quantum Computing (RQC), Wako, Saitama 351-0198, Japan}

\author{Zixin Huang}
\affiliation{School of Science, STEM College, RMIT University, Melbourne, VIC 3000, Australia}

\author{Mark M. Wilde}
\affiliation{School of Electrical and Computer Engineering, Cornell University, Ithaca, New
York 14850, USA}
\begin{abstract}
    In quantum reading, classical messages are encoded in sequences of quantum channels and recovered by probing the channels and processing their outputs. We study the reading capacity of a finite family of finite-dimensional channels under an average constraint on the probing energy, allowing arbitrary adaptive operations between channel uses. Using an input-dependent chain rule for the Belavkin–Staszewski relative entropy, we derive a converse bound expressed as an optimization involving channel Choi operators and the operator relative entropy. This formulation admits semidefinite approximations and reduces, without an energy constraint, to a channel information radius. To obtain achievable rates, we construct bilinear semidefinite lower approximations to a standard non-adaptive reading bound. Alternating optimization produces feasible probe states and encoding distributions, whose Holevo information gives a directly evaluable achievable rate. For jointly classical–quantum channels, we derive a separate converse based on the Umegaki relative entropy. Without an energy constraint, this converse matches a non-adaptive achievable rate, recovering a recent result of Pascual Abraldes and Winter: non-adaptive protocols suffice to achieve the unconstrained reading capacity of jointly classical--quantum channels.
\end{abstract}

\maketitle

\section{Introduction}

\subsection{Background and motivation}

Quantum information theory has identified information processing tasks in which quantum resources improve performance beyond that attainable with classical strategies. Quantum metrology~\cite{GLM11} and quantum illumination~\cite{Lloyd08} are prominent examples. Quantum reading provides another setting in which the choice of a quantum probe can have an operational advantage~\cite{Pirandola11,PLGMB11}. In this task, an encoder stores a classical message in a sequence of memory cells, each of which implements a quantum channel selected from a prescribed family. A reader retrieves the message by preparing probe systems, allowing them to interact with the memory cells, and processing the resulting outputs. The relevant resources therefore include the probe states, any retained ancillary systems, and the operations used to decode the message.

The original optical formulation of quantum reading demonstrated that nonclassical light can outperform classical mixtures of coherent states when distinguishing memory cells with different reflectivities under a constraint on the mean number of incident photons~\cite{Pirandola11}. Ref.~\cite{PLGMB11} subsequently developed a capacity formulation based on encoding across multiple cells and collective decoding. The quantum reading capacity quantifies the largest asymptotic rate, in bits per cell, at which classical messages can be recovered with vanishing error. This formulation connects the task of channel discrimination with the coding problem of reliably retrieving information from a large memory.

The general reading problem allows the reader to retain quantum information between successive interactions with the memory. Ref.~\cite{DW19} formulated quantum reading protocols that include arbitrary adaptive operations between channel uses and established converse bounds for this setting. This paper also presented bounds for environment-parametrized memory cells and evaluated capacities for several structured families. Other work has shown that entanglement across probe inputs can increase achievable reading rates~\cite{LP17}.  These results illustrate why determining the capacity requires controlling a substantially larger class of strategies than independent probing followed by separate measurements.

When quantum memories are probed by photonic systems, such as the ones considered in~\cite{Pirandola11,PLGMB11}, it becomes necessary to introduce energy constraints on the probe states to ensure the reading rate can be achieved by physical processes. Similar energy constraints were also considered in~\cite{DW19}. In this work, we require the reading protocol to follow an average energy constraint, where the average energy of the probe states per memory cell cannot exceed a fixed energy budget. This allows the maximum flexibility to the decoder, where they can allocate large energy to read some memory cells by reducing the energy allocation for reading other memory cells.


Existing information-theoretic bounds do not, by themselves, provide a general numerical procedure for evaluating the energy-constrained capacity. In particular, the general adaptive converse of~\cite{DW19}  involves an optimization of an information gain over input ensembles and a reference system whose dimension is not bounded in its formulation. Bounds based on particular channel simulations can be effective when a suitable simulation is available, but they do not directly supply a uniform computational prescription for an arbitrary finite channel family. Likewise, a non-adaptive achievable rate can be evaluated for a fixed probe and encoding distribution, while jointly optimizing these choices remains a separate numerical problem. These considerations motivate bounds that combine operational validity with explicit finite-dimensional optimization procedures.

\subsection{Summary of results}

In this work, we derive an upper bound on the energy-constrained reading capacity of an arbitrary finite family of finite-dimensional quantum channels. The bound applies to the general adaptive protocols described above. Our approach uses an input-dependent chain rule for the Belavkin--Staszewski relative entropy~\cite{HLSW26} to bound the information accumulated over the entire protocol. Expressing the resulting quantity through the channel Choi operators converts the average-energy constraint into an operator inequality involving the Hamiltonian. Semidefinite approximations of the operator relative entropy~\cite{FS19} then provide a means of evaluating the bound numerically. In the absence of an energy constraint, the converse reduces to an information radius defined by the Belavkin--Staszewski channel relative entropy.

We also develop a numerical approach to the familiar non-adaptive achievable rate obtained by probing each cell with the same state and jointly decoding the outputs. Using an integral representation of the Umegaki relative entropy~\cite{Jen24} and its semidefinite discretizations~\cite{KS26,KW26}, we construct a family of bilinear semidefinite optimization problems that lower-bound this rate. Each feasible solution provides a probe state and an encoding distribution, whose Holevo information can then be evaluated directly. Alternating optimization offers a practical way to search for such solutions, with each block update given by a semidefinite program. This procedure does not require finding the global optimum in order to produce an achievable rate.

For jointly classical--quantum channels, in which every channel measures its input in the same orthonormal basis, we obtain a converse based directly on the Umegaki relative entropy. Its derivation uses the input-dependent form of the classical--quantum channel chain rule~\cite{WBHK20}. Without an energy constraint, this converse matches a non-adaptive achievable rate and yields an information-radius characterization of the capacity. In particular, we recover a recent result of~\cite{Pascual26,PW26} that the unconstrained reading capacity of a set of jointly classical--quantum channels is achieved by a non-adaptive protocol.

\subsection{Paper organization}

The remainder of this paper is organized as follows: Section~\ref{sec:notations} introduces the notation and relative entropies used throughout the work. Section~\ref{sec:read_cap_defn} defines quantum reading capacity with energy constraints. Section~\ref{sec:read_cap_bnd} establishes the converse and achievable bounds and addresses the task of quantum reading with jointly classical--quantum channels, and Section~\ref{sec:numerics} presents numerical examples.

\section{Notation and Preliminaries}\label{sec:notations}

In this section, we review background material on the three major elements that we use in the rest of the work: quantum states, channels, and divergences.

\subsection{Quantum states and channels}

A quantum state $\rho_A$ is a positive semidefinite, unit-trace operator acting on a Hilbert space~$\mathcal{H}_A$. We denote the set of all linear operators acting on the Hilbert space $\mathcal{H}_A$ by $\mathcal{L}(A)$ and the set of all quantum states acting on this Hilbert space by $\mathcal{S}(A)$. A bipartite quantum state $\rho_{AB}$ acting on the Hilbert space $\mathcal{H}_{A}\otimes \mathcal{H}_B$ is called separable if it can be written as
\begin{equation}
	\rho_{AB} = \sum_{x\in \mathcal{X}} p(x)\sigma^x_{A}\otimes\tau^x_{B},
 \label{eq:def-sep-state}
\end{equation}
where $\{p(x)\}_{x\in \mathcal{X}}$ is a probability distribution and $\{\sigma^x_{A}\}_{x\in \mathcal{X}}$ and $\{\tau^x_{B}\}_{x\in \mathcal{X}}$ are sets of states. Any quantum state that is not separable is said to be entangled. The maximally entangled state vector in the Hilbert space $\mathcal{H}_{A}\otimes\mathcal{H}_B$ is denoted as follows:
\begin{equation}
	|\Phi^d\rangle_{AB} \coloneqq \frac{1}{\sqrt{d}}\sum_{i=0}^{d-1} |i\rangle_A|i\rangle_B,
\end{equation}
where $\{|i\rangle\}_{i=0}^{d-1}$ is an orthonormal basis and $d$ is the Schmidt rank of the state. We denote the corresponding density operator as $\Phi^d_{AB} \equiv |\Phi^d\rangle\!\langle \Phi^d|_{AB}$. 

A quantum channel $\mathcal{N}_{A\to B}$ is a completely positive (CP) and trace-preserving (TP) linear map that takes an operator acting on the Hilbert space $\mathcal{H}_A$ as input and outputs an operator acting on the Hilbert space $\mathcal{H}_B$. We denote the Choi operator of a channel $\mathcal{N}_{A\to B}$ by  $\Gamma^{\mathcal{N}}_{RB}$, which is defined as follows:
\begin{equation}\label{eq:Choi_op_defn}
	\Gamma^{\mathcal{N}}_{RB} \coloneqq \mathcal{N}_{A\to B}\!\left(d\Phi^d_{RA}\right),
\end{equation}
where $\Phi^d_{RA}$ is the maximally entangled state of Schmidt rank $d$ and system $R$ is isomorphic to system $A$. The normalized Choi operator is called the Choi state of the channel, and it is defined as follows:
\begin{equation}
    \Phi^{\mathcal{N}}_{RB} \coloneqq \mathcal{N}_{A\to B}\!\left(\Phi^d_{RA}\right).
\end{equation}

For a bipartite state $\rho_{AB}$, we use the notation $\rho_A \coloneqq \operatorname{Tr}_B\!\left[\rho_{AB}\right]$.

\subsection{Quantum relative entropies and their identities}

The generalized divergence of states and channels plays an important role in our investigation. 

Let $\mathbb{R}$ denote the field of real numbers. A generalized divergence of states~\cite{PV10} is a functional $\mathbf{D}\colon \mathcal{S}(A)\times \mathcal{S}(A) \to \mathbb{R}\cup \{+\infty\}$, such that, for arbitrary states $\rho_A,\sigma_A\in \mathcal{S}(A)$ and an arbitrary channel $\mathcal{N}_{A\to B}$, the data-processing inequality holds
\begin{equation}
    \mathbf{D}\!\left(\rho_A\Vert\sigma_A\right) \ge \mathbf{D}\!\left(\mathcal{N}_{A\to B}(\rho_A)\Vert\mathcal{N}_{A\to B}(\sigma_A)\right).
\end{equation}

Let us briefly discuss some exemplary generalized divergences and their properties that are relevant to this work. 

\begin{itemize}
    \item \textbf{Hypothesis-testing relative entropy:} The hypothesis-testing relative entropy between two states is defined for a parameter $\varepsilon \in [0,1]$ as follows~\cite{BD10,BD11,WR12}:
    \begin{equation}\label{eq:hypo_test_rel_ent_defn}
    D^{\varepsilon}_H\!\left(\rho\Vert\sigma\right) \coloneqq -\log_2 \inf_{0\le \Lambda \le I}\left\{\operatorname{Tr}\!\left[\Lambda\sigma\right]:\operatorname{Tr}\!\left[\Lambda\rho\right]\ge 1-\varepsilon\right\}.
\end{equation}

\item \textbf{Umegaki relative entropy:} The Umegaki relative entropy, also known as quantum relative entropy, is defined as follows~\cite{Ume62}:
\begin{equation}
    D\!\left(\rho\Vert\sigma\right) \coloneqq \Bigg\{\begin{array}{cc}
            \operatorname{Tr}\!\left[\rho\!\left(\log_2\rho - \log_2\sigma\right)\right] & \text{if }\operatorname{supp}(\rho) \subseteq \operatorname{supp}(\sigma)   \\
             +\infty & \text{otherwise} 
        \end{array},
\end{equation}
where the logarithm is taken on the support of its argument.

The Umegaki relative entropy between two states $\rho$ and $\sigma$ is related to the hypothesis-testing relative entropy as follows~\cite{WR12}:
\begin{equation}\label{eq:hypo_test_vs_umegaki}
    D^{\varepsilon}_H\!\left(\rho\middle\Vert\sigma\right) \le \frac{1}{1-\varepsilon}\!\left(D\!\left(\rho\middle\Vert\sigma\right) + h_2(\varepsilon)\right) \qquad \forall \varepsilon\in[0,1),
\end{equation}
where $h_2(\varepsilon)\coloneqq -\varepsilon\log_2(\varepsilon) - (1-\varepsilon)\log_2(1-\varepsilon)$ is the binary entropy.

The Umegaki relative entropy has the direct-sum property~\cite[Exercise 11.8.8]{Wilde17book}. Let $p\colon \mathcal{X} \to [0,1]$ be a probability distribution over a finite-dimensional alphabet $\mathcal{X}$. Let $\left\{\rho^x_A\right\}_{x\in \mathcal{X}}$ and $\left\{\sigma^x_A\right\}_{x\in \mathcal{X}}$ be sets of states. Then,
\begin{equation}\label{eq:Umegaki_direct_sum}
    D\!\left(\sum_{x\in \mathcal{X}}p(x)|x\rangle\!\langle x|_X\otimes \rho^x_A\middle\Vert\sum_{x\in \mathcal{X}}p(x)|x\rangle\!\langle x|_X\otimes \sigma^x_A\right) = \sum_{x\in \mathcal{X}}p(x)D\!\left(\rho^x_A\Vert\sigma^x_A\right).
\end{equation}

\item \textbf{Belavkin--Staszewski relative entropy:} The Belavkin--Staszewski relative entropy of states is defined as follows~\cite{BS82}:
\begin{equation}\label{eq:BS_defn}
        \widehat{D}\!\left(\rho\Vert\sigma\right) \coloneqq \Bigg\{\begin{array}{cc}
            \operatorname{Tr}\!\left[\rho\log_2\!\left(\sqrt{\rho}\sigma^{-1}\sqrt{\rho}\right)\right] & \text{if }\operatorname{supp}(\rho) \subseteq \operatorname{supp}(\sigma)   \\
             +\infty & \text{otherwise} 
        \end{array},
    \end{equation}
    where $\sigma^{-1}$ is taken on the support of $\sigma$ and the logarithm is evaluated on the support of $\rho$.

    The Belavkin--Staszewski relative entropy of a pair of states $\rho$ and $\sigma$ is bounded from below by the Umegaki relative entropy \cite{HiaiPetz1993GoldenThompson}. That is,
    \begin{equation}\label{eq:BS_ge_Umegaki}
        \widehat{D}(\rho\Vert\sigma) \ge D(\rho\Vert\sigma).
    \end{equation}

\end{itemize}

The generalized divergence of channels is defined in terms of the generalized divergence of states as follows \cite{LKDW18}:
\begin{equation}
    \mathbf{D}\!\left(\mathcal{N}_{A\to B}\middle\Vert\mathcal{M}_{A\to B}\right) \coloneqq \sup_{\rho_{RA}}\mathbf{D}\!\left(\mathcal{N}_{A\to B}\!\left(\rho_{RA}\right)\middle\Vert\mathcal{M}_{A\to B}\!\left(\rho_{RA}\right)\right),
    \label{eq:def-gen-div}
\end{equation}
where the supremum is taken over every state $\rho_{RA}$ with no restriction on the dimension of $R$.

The Belavkin--Staszewski relative entropy of channels is defined as an instance of \eqref{eq:def-gen-div} as follows:
\begin{equation}
    \widehat{D}\!\left(\mathcal{N}_{A\to B}\middle\Vert\mathcal{M}_{A\to B}\right) \coloneqq \sup_{\rho_{RA}}\widehat{D}\!\left(\mathcal{N}_{A\to B}\!\left(\rho_{RA}\right)\middle\Vert\mathcal{M}_{A\to B}\!\left(\rho_{RA}\right)\right),
\end{equation}
has the following chain-rule property~\cite{FF21}:
\begin{equation}\label{eq:geo_ent_amort}
    \widehat{D}\!\left(\mathcal{N}\!\left(\rho\right)\middle\Vert\mathcal{M}\!\left(\sigma\right)\right) \le \widehat{D}\!\left(\mathcal{N}\middle\Vert\mathcal{M}\right) + \widehat{D}\!\left(\rho\Vert\sigma\right).
\end{equation}


A convenient way to study the Belavkin--Staszewski relative entropy of channels is through the operator relative entropy \cite{Furuta1990Operator}, which is defined as follows:
\begin{equation}\label{eq:D_op_defn}
    \widehat{D}_{\operatorname{op}}\!\left(X,Y\right) \coloneqq X^{\frac{1}{2}}\log_2\!\left(X^{\frac{1}{2}}Y^{-1}X^{\frac{1}{2}}\right)X^{\frac{1}{2}} \qquad \operatorname{if }\operatorname{supp}\!\left(X\right) \subseteq \operatorname{supp}(Y),
\end{equation}
where the inverse is taken on the support of $Y$.
It was shown in~\cite[Theorem~2]{FF21} that the Belavkin--Staszewski relative entropy of channels can be expressed in the following closed form:
\begin{equation}\label{eq:BS_ch_closed_form}
    \widehat{D}\!\left(\mathcal{N}_{A\to B}\middle\Vert\mathcal{M}_{A\to B}\right) = \left\Vert\operatorname{Tr}_B\!\left[\widehat{D}_{\operatorname{op}}\!\left(\Gamma^{\mathcal{N}}_{AB},\Gamma^{\mathcal{M}}_{AB}\right)\right]\right\Vert_{\infty},
\end{equation}
where $\Gamma^{\mathcal{N}}$ and $\Gamma^{\mathcal{M}}$ are the Choi operators of the channels $\mathcal{N}$ and $\mathcal{M}$, respectively.

The following input-dependent chain rule for the Belavkin--Staszewski relative entropy in terms of the operator relative entropy was obtained in~\cite[Appendix B]{HLSW26}: 
\begin{equation}\label{eq:ec_chain_rule_ing}
    \widehat{D}\!\left(\mathcal{N}_{A\to B}\!\left(\rho_{RA}\right)\Vert\mathcal{M}_{A\to B}\!\left(\sigma_{RA}\right)\right) \le \widehat{D}\!\left(\rho_{RA}\Vert\sigma_{RA}\right) + \operatorname{Tr}\!\left[\left(\rho^T_A\otimes I_B\right)\widehat{D}_{\operatorname{op}}\!\left(\Gamma^{\mathcal{N}}_{AB},\Gamma^{\mathcal{M}}_{AB}\right)\right]
\end{equation}
if $\operatorname{supp}\!\left(\Gamma^{\mathcal{N}}_{AB}\right) \subseteq \operatorname{supp}\!\left(\Gamma^{\mathcal{M}}_{AB}\right)$, where $\rho^T_A$ denotes the transpose of $\rho_A$ with respect to the chosen basis for system $A$ in the representation of the Choi operators $\Gamma^{\mathcal{N}}$ and $\Gamma^{\mathcal{M}}$.

\section{Reading capacity of channels}\label{sec:read_cap_defn}

In this section, we describe the task of quantum reading and define the energy-constrained reading capacity of a set of channels, which is the quantity we investigate in the remainder of this paper. We follow~\cite{DW19} for our definition of the energy-constrained reading capacity.

In a quantum reading protocol, an encoder has the ability to inscribe a memory cell with a quantum channel selected from a set of channels $\left\{\mathcal{N}^{x}_{A\to B}\right\}_{x\in \mathcal{X}}$. The encoder encodes a message $m\in \mathcal{M}$ in the quantum memory using a codeword $x^n(m)\coloneqq [x_1(m),x_2(m),\ldots,x_n(m)]$ such that the $i^{\text{th}}$ memory cell is inscribed by the channel $\mathcal{N}^{x_i(m)}_{A\to B}$. The decoder then uses a probe state $\rho_{AR_1}$, a set of intermediate processing channels $\left\{\mathcal{A}^i_{BR_i\to AR_{i+1}}\right\}_{i\in [n-1]}$, and a decoding POVM $\left\{\Lambda^{\hat{m}}_{BR_n}\right\}_{\hat{m}\in \mathcal{M}}$ to retrieve the original message $m$, where systems $R_1, R_2,\ldots, R_n$ are of arbitrary dimension. 

Let us denote the set of channels, $\left\{\mathcal{N}^{x}_{A\to B}\right\}_{x\in \mathcal{X}}$, by the symbol $\mathcal{N}$. Let us also define the following set that describes a reading protocol:
\begin{equation}
    \mathcal{P}^{(n)}\coloneqq \left\{\rho_{AR_1},\left\{\mathcal{A}^i_{BR_i\to AR_{i+1}}\right\}_{i\in [n-1]},\left\{\Lambda^{\hat{m}}_{BR_n}\right\}_{\hat{m}\in \mathcal{M}}\right\}.
\end{equation}
The error in the decoding an arbitrary message $m$ encoded in $n$ channels using a protocol $\mathcal{P}^{(n)}$ is defined as follows:
\begin{equation}
    p_{\operatorname{err}}\!\left(\mathcal{N},\mathcal{P}^{(n)}\right) \coloneqq \max_{m\in \mathcal{M}}\sum_{\hat{m}\in \mathcal{M}\setminus \{m\}}\operatorname{Tr}\!\left[\Lambda^{\hat{m}}_{BR_n}\left(\mathcal{N}^{x_n(m)}\circ\mathcal{A}^{n-1}\circ\mathcal{N}^{x_{n-1}(m)}\circ\cdots\circ\mathcal{N}^{x_1(m)}\!\left(\rho_{AR_1}\right)\right)\right].
\end{equation}

\begin{definition}
    The $n$-shot reading capacity of a set of channels $\mathcal{N}\coloneqq \left\{\mathcal{N}^{x}_{A\to B}\right\}_{x\in \mathcal{X}}$ is defined as follows:
    \begin{equation}
        R^{\varepsilon,n}\!\left(\mathcal{N}\right) \coloneqq \sup_{\mathcal{P}^{(n)}}\left\{
             \log_2|\mathcal{M}|:  
              p_{\operatorname{err}}\!\left(\mathcal{N},\mathcal{P}^{(n)}\right) \le \varepsilon
        \right\}, 
    \end{equation}
    where 
    \begin{equation}
    \mathcal{P}^{(n)}\coloneqq \left\{\rho_{AR_1},\left\{\mathcal{A}^i_{BR_i\to AR_{i+1}}\right\}_{i\in [n-1]},\left\{\Lambda^{\hat{m}}_{BR_n}\right\}_{\hat{m}\in \mathcal{M}}\right\}. 
    \label{eq:protocol-P}
    \end{equation}
\end{definition}

\begin{definition}
    The reading capacity of a set of channels $\mathcal{N}$ is defined as follows:
    \begin{equation}
        R\!\left(\mathcal{N}\right) \coloneqq \liminf_{\varepsilon \to 0^+}\liminf_{n\to \infty}\frac{1}{n}R^{\varepsilon, n}\!\left(\mathcal{N}\right).
    \end{equation}
\end{definition}


In practical scenarios, the decoder does not have access to arbitrary probe states and post-processing channels. Instead, the decoder is restricted by some energy constraint. In light of this practical restriction, we define the energy-constrained $n$-shot reading capacity of a set of channels, where the average energy of the states used to probe each channel cannot exceed a given threshold. Consider a reading protocol $\mathcal{P}^{(n)}$ of the form in \eqref{eq:protocol-P}. The state that probes the $i^{\operatorname{th}}$ channel in this reading protocol is given by the following:
\begin{equation}
    \rho^{i-1,m}_{AR_{i}} \coloneqq  \mathcal{A}^{i-1}_{BR_{i-1}\to AR_{i}}\circ\mathcal{N}^{x_{i-1}(m)}_{A\to B}\!\left(\rho^{i-2,m}_{AR_{i-1}}\right) \qquad \forall i\in \left\{2,3,\ldots,n\right\}, m\in \mathcal{M}
\end{equation}
where $\rho^{0,m}_{AR_1} = \rho_{AR_1}$. Let $H_A$ be the Hamiltonian associated with the probing system, and let $E$ be the average-energy threshold. Then the $n$-shot reading capacity of a set of channels $\mathcal{N}$ is defined as follows:
\begin{equation}\label{eq:n_shot_ec_read_cap_defn}
        R^{\varepsilon,n}_{H,E}\!\left(\mathcal{N}\right) \coloneqq \sup_{\mathcal{P}^{(n)}}\left\{
             \log_2|\mathcal{M}|:  
              p_{\operatorname{err}}\!\left(\mathcal{N},\mathcal{P}^{(n)}\right) \le \varepsilon, \operatorname{Tr}\!\left[H_A\overline{\rho}^m_A\right]\le E \quad \forall m\in\mathcal{M}
        \right\}, 
\end{equation}
where 
\begin{equation}\label{eq:avg_state_defn}
    \overline{\rho}^m_A \coloneqq \frac{1}{n}\sum_{i=0}^{n-1}\operatorname{Tr}_{R_{i+1}}\!\left[\rho^{i,m}_{AR_{i+1}}\right] \qquad \forall m\in \mathcal{M}.
\end{equation}
The energy-constrained reading capacity of the set of channels $\mathcal{N}$, for a given Hamiltonian $H$ and average-energy threshold $E$, is then  defined as follows:
\begin{equation}\label{eq:ec_read_cap_defn}
        R_{H,E}\!\left(\mathcal{N}\right) \coloneqq \liminf_{\varepsilon \to 0^+}\liminf_{n\to \infty}\frac{1}{n}R^{\varepsilon, n}_{H,E}\!\left(\mathcal{N}\right).
\end{equation}

\section{Efficiently computable bounds on the energy-constrained reading capacity}\label{sec:read_cap_bnd}

In this section, we derive efficiently computable upper and lower bounds on the energy-constrained reading capacity of a set of channels. Here, we use the term efficiently computable to indicate that the bounds can be computed in $O(\text{poly}(d))$, where $d$ is the dimension of the Choi operators of the channels.

\subsection{Upper bound on the energy-constrained reading capacity}

Let us first consider an upper bound on the energy-constrained reading capacity from above. Our first main result is encapsulated in Theorem~\ref{theo:ec_read_cap_ub} stated below.

\begin{theorem}\label{theo:ec_read_cap_ub}
    The reading capacity of a set of channels $\mathcal{N}\coloneqq \left\{\mathcal{N}^x_{A\to B}\right\}_{x\in \mathcal{X}}$ is bounded from above as follows:
    \begin{equation}\label{eq:ec_read_cap_ub}
    R_{H,E}\!\left(\mathcal{N}\right) \le \inf_{\Gamma^{\mathcal{S}}_{AB}\ge 0}\inf_{\lambda \in \mathbb{R},\mu\ge 0}\left\{
    \begin{array}{c}
         \lambda + \mu E: \\
         \left(\operatorname{Tr}_B\!\left[\widehat{D}_{\operatorname{op}}\!\left(\Gamma^{x}_{AB},\Gamma^{\mathcal{S}}_{AB}\right)\right]\right)^T \le \lambda I + \mu H \qquad \forall x\in \mathcal{X},\\
         \operatorname{Tr}_B\!\left[\Gamma^{\mathcal{S}}_{AB}\right] = I_A
    \end{array}
    \right\},
\end{equation}
where $\Gamma^x_{AB}$ is the Choi operator of $\mathcal{N}^x_{A\to B}$ for every $x\in \mathcal{X}$ and $\widehat{D}_{\operatorname{op}}$ is defined in~\eqref{eq:D_op_defn}.
\end{theorem}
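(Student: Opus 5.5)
The plan is the standard meta-converse for reading capacity, with a channel $\mathcal{S}_{A\to B}$ playing the role of the reference. Fix a feasible $(\Gamma^{\mathcal{S}},\lambda,\mu)$; the constraint $\operatorname{Tr}_B[\Gamma^{\mathcal{S}}_{AB}]=I_A$ means $\Gamma^{\mathcal{S}}$ is the Choi operator of a channel $\mathcal{S}$. Fix also an $(n,|\mathcal{M}|,\varepsilon)$ energy-constrained protocol $\mathcal{P}^{(n)}$.

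\textbf{Step 1 (reduction to a hypothesis test).} Compare the true final state $\omega^m_{BR_n}$ for message $m$ with the fictitious state $\tau_{BR_n}$ obtained by replacing every cell with $\mathcal{S}$. The state $\tau$ does not depend on $m$, so $\sum_m \operatorname{Tr}[\Lambda^m\tau]=1$. Hence some $m$ satisfies $\operatorname{Tr}[\Lambda^m\tau]\le 1/|\mathcal{M}|$, while $\operatorname{Tr}[\Lambda^m\omega^m]\ge 1-\varepsilon$. This gives
\[
\log_2|\mathcal{M}|\le D^\varepsilon_H(\omega^m\Vert\tau)\le \frac{1}{1-\varepsilon}\Bigl(\widehat{D}(\omega^m\Vert\tau)+h_2(\varepsilon)\Bigr),
\]
using \eqref{eq:hypo_test_vs_umegaki} and \eqref{eq:BS_ge_Umegaki}.

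\textbf{Step 2 (peeling off channel uses).} Apply the input-dependent chain rule \eqref{eq:ec_chain_rule_ing} once for each channel use, with $\mathcal{N}=\mathcal{N}^{x_i(m)}$, $\mathcal{M}=\mathcal{S}$, true input $\rho^{i-1,m}_{AR_i}$, and fictitious input $\sigma^{i-1}_{AR_i}$. Between uses, the maps $\mathcal{A}^i$ are common to both sequences, so data processing for $\widehat{D}$ lets us pass through them. The initial term is $\widehat{D}(\rho_{AR_1}\Vert\rho_{AR_1})=0$. Telescoping yields
\[
\widehat{D}(\omega^m\Vert\tau)\le\sum_{i=1}^n \operatorname{Tr}\!\Bigl[\bigl((\rho^{i-1,m}_A)^T\otimes I_B\bigr)\widehat{D}_{\operatorname{op}}(\Gamma^{x_i(m)},\Gamma^{\mathcal{S}})\Bigr].
\]
Here $\operatorname{Tr}[(X^T\otimes I)Y]=\operatorname{Tr}[X\,(\operatorname{Tr}_B Y)^T]$.

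\textbf{Step 3 (using the constraints).} The operator inequality in \eqref{eq:ec_read_cap_ub} gives, for each $i$, a summand at most $\lambda+\mu\operatorname{Tr}[H\rho^{i-1,m}_A]$. Summing over $i$ and using the average-energy constraint $\operatorname{Tr}[H\overline{\rho}^m_A]\le E$ with $\mu\ge 0$, the total is at most $n(\lambda+\mu E)$. Therefore
\[
\frac{1}{n}\log_2|\mathcal{M}|\le\frac{\lambda+\mu E+h_2(\varepsilon)/n}{1-\varepsilon}.
\]
Taking $n\to\infty$, then $\varepsilon\to0$, and finally the infimum over feasible parameters gives the claim.

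\textbf{Main obstacle: support conditions.} The chain rule \eqref{eq:ec_chain_rule_ing} needs $\operatorname{supp}\Gamma^x\subseteq\operatorname{supp}\Gamma^{\mathcal{S}}$, and $\widehat{D}_{\operatorname{op}}$ is only defined under this condition. I would read the feasible set as implicitly requiring it, since otherwise the constraint is undefined or infinite. Where needed I would also use a perturbation $\Gamma^{\mathcal{S}}\to(1-\delta)\Gamma^{\mathcal{S}}+\delta I/d_B$ together with lower semicontinuity or continuity of the bound. A second point to check is that the fictitious inputs $\sigma^{i-1}$ genuinely arise from the same processing maps, so that data processing applies; this is automatic from the construction of $\tau$.
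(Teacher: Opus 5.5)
Your proposal is correct and matches the paper's proof in all essentials: a reference channel $\mathcal{S}$ in every cell, the bound \eqref{eq:hypo_test_vs_umegaki}, the switch to $\widehat{D}$, telescoping via \eqref{eq:ec_chain_rule_ing} with data processing through the common maps $\mathcal{A}^i$, and the average-energy constraint, with the support condition treated as implicit in the feasible set exactly as the paper does (so your perturbation is unnecessary). The differences are only streamlining: you use the worst-case error to isolate one message with $\operatorname{Tr}[\Lambda^m\tau]\le 1/|\mathcal{M}|$ instead of the classical-correlation test, direct-sum identity and average-to-max step, and you apply $(\operatorname{Tr}_B[\widehat{D}_{\operatorname{op}}])^T\le\lambda I+\mu H$ directly to each summand instead of introducing $Q_A$ and the Lagrangian weak-duality argument.
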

\begin{proof}
    Let $\left\{\mathcal{N}^{x}_{A\to B}\right\}_{x\in \mathcal{X}}$ be a set of channels, and let $\left\{x^n(m)\right\}_{m\in \mathcal{M}_n}$ be a codebook that is being used to encode a message $m$ into $n$ channels. Consider an arbitrary quantum reading protocol with probe state $\rho_{AR_1}$, intermediate processing channels $\left\{\mathcal{A}^i_{BR_i\to AR_{i+1}}\right\}_{i\in [n-1]}$, and POVM $\left\{\Lambda^{\hat{m}}_{BR_n}\right\}_{\hat{m}\in \mathcal{M}_n}$ such that the average-energy constraint is satisfied  with respect to the Hamiltonian~$H$ and energy $E$ (see~\eqref{eq:n_shot_ec_read_cap_defn} and~\eqref{eq:avg_state_defn}). We will use the following notations throughout the proof for simplicity:
    \begin{align}
        \mathcal{N} &\coloneqq \left\{\mathcal{N}^x_{A\to B}\right\}_{x\in \mathcal{X}},\\
        \mathcal{P}^{(n)} &\coloneqq \left\{\rho_{AR_1},\left\{\mathcal{A}^i_{BR_i\to AR_{i+1}}\right\}_{i\in [n-1]}, \left\{\Lambda^{\hat{m}}_{BR_n}\right\}_{\hat{m}\in \mathcal{M}_n}\right\}.
    \end{align}

    Let us define the following measurement channel:
    \begin{equation}
        \mathcal{L}_{BR_n\to \hat{M}}(\cdot) \coloneqq \sum_{\hat{m}\in \mathcal{M}_n}\operatorname{Tr}\!\left[\Lambda^{\hat{m}}_{BR_n}(\cdot)\right]|\hat{m}\rangle\!\langle \hat{m}|_{\hat{M}}.
    \end{equation}
    The worst-case error of this reading protocol is then given by the following:
    \begin{equation}
        p_{\operatorname{err}}\!\left(\mathcal{N}, \mathcal{P}^{(n)}\right) = \max_{m\in \mathcal{M}_n}\sum_{\hat{m}\in \mathcal{M}_n\setminus \{m\}}\langle \hat{m}|\mathcal{L}\circ\mathcal{N}^{x_n(m)}\circ\mathcal{A}^{n-1}\circ\cdots\circ\mathcal{N}^{x_1(m)}\!\left(\rho_{AR_1}\right)|\hat{m}\rangle,
    \end{equation}
    and the average case error of this reading protocol is given by the following:
    \begin{equation}\label{eq:avg_case_err_exp}
        \overline{p}_{\operatorname{err}}\!\left(\mathcal{N},\mathcal{P}^{(n)}\right) = \frac{1}{|\mathcal{M}_n|}\sum_{\substack{m,\hat{m}\in \mathcal{M}_n,\\ \hat{m}\neq m}}\langle \hat{m}|\mathcal{L}\circ\mathcal{N}^{x_n(m)}\circ\mathcal{A}^{n-1}\circ\cdots\circ\mathcal{N}^{x_1(m)}\!\left(\rho_{AR_1}\right)|\hat{m}\rangle.
    \end{equation}
    Now consider the following channel:
    \begin{equation}
        \mathcal{Q}^{\mathcal{N}}_{M\to \hat{M}}(\cdot) \coloneqq \sum_{m\in \mathcal{M}_n}\mathcal{L}\circ\mathcal{N}^{x_n(m)}\circ\mathcal{A}^{n-1}\circ\cdots\circ\mathcal{N}^{x_1(m)}\!\left(\rho_{AR_1}\right)\langle m|(\cdot)|m\rangle .
    \end{equation}
    Let $\omega_{MM'}$ denote the maximally-classically correlated state, and let $\Pi_{MM'}$ denote the test for classical correlations. That is,
    \begin{align}
        \omega_{MM'} &\coloneqq \frac{1}{|\mathcal{M}_n|}\sum_{m\in \mathcal{M}_n}|m\rangle\!\langle m|_{M}\otimes |m\rangle\!\langle m|_{M'},\\
        \Pi_{MM'} &\coloneqq \sum_{m\in \mathcal{M}_n}|m\rangle\!\langle m|_{M}\otimes |m\rangle\!\langle m|_{M'}.
    \end{align}
    Let us consider the following expression:
    \begin{align}
        &\operatorname{Tr}\!\left[\Pi_{M'\hat{M}}\mathcal{Q}^{\mathcal{N}}_{M\to \hat{M}}\!\left(\omega_{MM'}\right)\right]\notag\\ &= \frac{1}{|\mathcal{M}_n|}\sum_{m\in \mathcal{M}_n}\operatorname{Tr}\!\left[|m\rangle\!\langle m|_{M'}\langle m|_{\hat{M}}\mathcal{L}\circ\mathcal{N}^{x_n(m)}\circ\mathcal{A}^{n-1}\circ\cdots\circ\mathcal{N}^{x_1(m)}\!\left(\rho_{AR_1}\right)|m\rangle_{\hat{M}}\right]\\
        &= \frac{1}{|\mathcal{M}_n|}\sum_{m\in \mathcal{M}_n}\langle m|_{\hat{M}}\mathcal{L}\circ\mathcal{N}^{x_n(m)}\circ\mathcal{A}^{n-1}\circ\cdots\circ\mathcal{N}^{x_1(m)}\!\left(\rho_{AR_1}\right)|m\rangle_{\hat{M}}\label{eq:avg_succ_prob_proj}\\
        &= 1-\overline{p}_{\operatorname{err}}\!\left(\mathcal{N},\mathcal{P}^{(n)}\right),\label{eq:avg_succ_prob_err_inv}
    \end{align}
    where the first equality follows from simply substituting the expressions for $\Pi_{M'\hat{M}}$, $\omega_{MM'}$, and $\mathcal{Q}^{\mathcal{N}}_{M\to \hat{M}}$, and the last equality follows from~\eqref{eq:avg_case_err_exp}. 

    If the set $\mathcal{N}$ contains only one channel, then it is clearly useless for the task of reading. Let us define
    \begin{equation}
        \mathcal{Q}^{\mathcal{N}_S}_{M\to \hat{M}}(\cdot) \coloneqq \sum_{m\in \mathcal{M}_n}\mathcal{L}\circ\mathcal{S}\circ\mathcal{A}^{n-1}\circ\cdots\circ\mathcal{S}\!\left(\rho_{AR_1}\right)\langle m|(\cdot)|m\rangle,
    \end{equation}
    where $\mathcal{S}_{A\to B}$ is a fixed channel such that $\operatorname{supp}(\Gamma^{\mathcal{N}^x}_{AB})\subseteq \operatorname{supp}(\Gamma^{\mathcal{S}}_{AB})$ for every $x\in \mathcal{X}$. Then,
   \begin{align}
        \operatorname{Tr}\!\left[\Pi_{M'\hat{M}}\mathcal{Q}^{\mathcal{N}_S}_{M\to \hat{M}}\!\left(\omega_{MM'}\right)\right]
        &= \frac{1}{|\mathcal{M}_n|}\sum_{m\in \mathcal{M}_n}\langle m|_{\hat{M}}\mathcal{L}\circ\mathcal{S}\circ\mathcal{A}^{n-1}\circ\cdots\mathcal{A}^{1}\circ\mathcal{S}\!\left(\rho_{AR_1}\right)|m\rangle_{\hat{M}}\\
        &= \frac{1}{|\mathcal{M}_n|}\operatorname{Tr}\!\left[\mathcal{L}\circ\mathcal{S}\circ\mathcal{A}^{n-1}\circ\cdots\mathcal{A}^{1}\circ\mathcal{S}\!\left(\rho_{AR_1}\right)\right]\\
        &= \frac{1}{|\mathcal{M}_n|},\label{eq:corr_test_pass_prob_useless}
    \end{align}
    where the first equality follows from~\eqref{eq:avg_succ_prob_proj}, the second and third equalities follow from the fact that $\mathcal{L}\circ\mathcal{S}\circ\mathcal{A}^{n-1}\circ\cdots\mathcal{A}^{1}\circ\mathcal{S}\!\left(\rho_{AR_1}\right)$ is a quantum state that is independent of the label $m$.

    The equality in~\eqref{eq:avg_succ_prob_err_inv} implies that for every set of channels $\mathcal{N}$ and reading protocol $\mathcal{P}^{(n)}$, if the worst case error in reading is less than or equal to $\varepsilon$ for some $\varepsilon\in (0,1)$, then
    \begin{equation}\label{eq:corr_test_pass_prob_res}
        \operatorname{Tr}\!\left[\Pi_{M'\hat{M}}\mathcal{Q}^{\mathcal{N}}_{M\to \hat{M}}\!\left(\omega_{MM'}\right)\right] \ge 1-\varepsilon
    \end{equation}
    since $\overline{p}_{\operatorname{err}}(\mathcal{N},\mathcal{P}^{(n)}) \le p_{\operatorname{err}}(\mathcal{N},\mathcal{P}^{(n)})$. Recall the definition of hypothesis testing relative entropy from~\eqref{eq:hypo_test_rel_ent_defn}. Since $0\le \Pi_{M'\hat{M}}\le I_{M'\hat{M}}$, \eqref{eq:corr_test_pass_prob_useless} and~\eqref{eq:corr_test_pass_prob_res} imply that
    \begin{equation}\label{eq:n_mess_hypo_test_ub}
        D^{\varepsilon}_H\!\left(\mathcal{Q}^{\mathcal{N}}_{M\to \hat{M}}\!\left(\omega_{MM'}\right)\middle\Vert\mathcal{Q}^{\mathcal{N}_S}_{M\to \hat{M}}\!\left(\omega_{MM'}\right)\right) \ge -\log_2\frac{1}{|\mathcal{M}_n|} = \log_2|\mathcal{M}_n|.
    \end{equation}
    Let us define the following states:
    \begin{align}
        \sigma^{(m)}_{\hat{M}}&\coloneqq \mathcal{L}\circ\mathcal{N}^{x_n(m)}\circ\mathcal{A}^{n-1}\circ\cdots\circ\mathcal{N}^{x_1(m)}\!\left(\rho_{AR_1}\right)\\
        \tau_{\hat{M}}&\coloneqq \mathcal{L}\circ\mathcal{S}\circ\mathcal{A}^{n-1}\circ\cdots\circ\mathcal{S}\!\left(\rho_{AR_1}\right).
    \end{align}
    Then,
    \begin{align}
        \mathcal{Q}^{\mathcal{N}}_{M\to \hat{M}}\!\left(\omega_{MM'}\right) &= \frac{1}{|\mathcal{M}_n|}\sum_{m\in \mathcal{M}_n}|m\rangle\!\langle m|_{M'}\otimes\sigma^{(m)}_{\hat{M}}\\
        \mathcal{Q}^{\mathcal{N}_S}_{M\to \hat{M}}\!\left(\omega_{MM'}\right) &= \frac{1}{|\mathcal{M}_n|}\sum_{m\in \mathcal{M}_n}|m\rangle\!\langle m|_{M'}\otimes\tau_{\hat{M}}
    \end{align}
    The inequality in~\eqref{eq:n_mess_hypo_test_ub} then leads to the following:
    \begin{align}
        &\log_2|\mathcal{M}_n|\notag\\
        &\le D^{\varepsilon}_H\!\left(\frac{1}{|\mathcal{M}_n|}\sum_{m\in \mathcal{M}_n}|m\rangle\!\langle m|_{M'}\otimes\sigma^{(m)}_{\hat{M}}\middle\Vert\frac{1}{|\mathcal{M}_n|}\sum_{m\in \mathcal{M}_n}|m\rangle\!\langle m|_{M'}\otimes\tau_{\hat{M}}\right)\\
        &\le \frac{1}{1-\varepsilon}\!\left(D\!\left(\frac{1}{|\mathcal{M}_n|}\sum_{m\in \mathcal{M}_n}|m\rangle\!\langle m|_{M'}\otimes\sigma^{(m)}_{\hat{M}}\middle\Vert\frac{1}{|\mathcal{M}_n|}\sum_{m\in \mathcal{M}_n}|m\rangle\!\langle m|_{M'}\otimes\tau_{\hat{M}}\right) + h_2(\varepsilon)\right)\\
        &= \frac{1}{1-\varepsilon}\!\left(\frac{1}{\left|\mathcal{M}_n\right|}\sum_{m\in \mathcal{M}_n}D\!\left(\sigma^{(m)}_{\hat{M}}\middle\Vert\tau_{\hat{M}}\right) + h_2(\varepsilon)\right)\\
        &\le \max_{m\in \mathcal{M}_n}\frac{1}{1-\varepsilon}\!\left(D\!\left(\sigma^{(m)}_{\hat{M}}\middle\Vert\tau_{\hat{M}}\right) + h_2(\varepsilon)\right),\label{eq:n_mess_umegaki_ub}
    \end{align}
where the second inequality follows from~\eqref{eq:hypo_test_vs_umegaki}, the equality follows from~\eqref{eq:Umegaki_direct_sum}, and the final inequality follows because the average is less than or equal to the maximum of a set of numbers. Since the second term in~\eqref{eq:n_mess_umegaki_ub} is independent of the number of channel uses $n$, the following inequality holds for every $\varepsilon\in [0,1)$:
\begin{equation}\label{eq:read_rate_ub_Umegaki}
    \liminf_{n\to \infty}\frac{\log_2|\mathcal{M}_n|}{n} \le \liminf_{n\to \infty}\max_{m\in \mathcal{M}_n}\frac{1}{n(1-\varepsilon)}D\!\left(\sigma^{(m)}_{\hat{M}}\middle\Vert\tau_{\hat{M}}\right).
\end{equation}
Note that $\sigma^{(m)}_{\hat{M}}$ and $\tau_{\hat{M}}$ implicitly depend on $n$.

Let us define the following states:
\begin{align}
    \sigma^{i,m}_{AR_{i+1}} &\coloneqq \mathcal{A}^i_{BR_{i}\to AR_{i+1}}\circ\mathcal{N}^{x_i(m)}_{A\to B}\!\left(\sigma^{i-1,m}_{AR_i}\right) \qquad \forall i\in [n-1],\\
    \tau^i_{AR_{i+1}} &\coloneqq \mathcal{A}^i_{BR_{i}\to AR_{i+1}}\circ\mathcal{S}_{A\to B}\!\left(\tau^{i-1}_{AR_i}\right) \qquad \forall i\in [n-1],\\
    \sigma^{n,m}_{BR_n} &\coloneqq \mathcal{N}^{x_n(m)}_{A\to B}\!\left(\sigma^{n-1,m}_{AR_n}\right),\\
    \tau^{n}_{BR_n} &\coloneqq \mathcal{S}_{A\to B}\!\left(\tau^{n-1}_{AR_n}\right),
\end{align}
where
\begin{equation}
    \sigma^{0,m}_{AR_1} = \tau^0_{AR_1} = \rho_{AR_1}.
\end{equation}
Let us denote the Choi operator of $\mathcal{N}^{x_n(m)}$ by $\Gamma^{x_n(m)}_{AB}$ and the Choi operator of $\mathcal{S}_{A\to B}$ by $\Gamma^{\mathcal{S}}_{AB}$. Then,
\begin{align}
    D\!\left(\sigma^{(m)}_{\hat{M}}\middle\Vert\tau_{\hat{M}}\right) &= D\!\left(\mathcal{L}_{BR_n\to \hat{M}}\!\left(\sigma^{n,m}_{BR_n}\right)\middle\Vert\mathcal{L}_{BR_n\to \hat{M}}\!\left(\tau^{n}_{BR_n}\right)\right)\\
    &\le D\!\left(\sigma^{n,m}_{BR_n}\middle\Vert\tau^{n}_{BR_n}\right)\label{eq:Before_umegaki_to_BS}\\
    &\le \widehat{D}\!\left(\sigma^{n,m}_{BR_n}\middle\Vert\tau^{n}_{BR_n}\right)\label{eq:Umegaki_to_BS}\\
    &\le \widehat{D}\!\left(\sigma^{n-1,m}_{AR_n}\middle\Vert\tau^{n-1}_{AR_n}\right) + \operatorname{Tr}\!\left[\left(\operatorname{Tr}_{R_n}\!\left[\sigma^{n-1,m}_{AR_n}\right]\right)^T\widehat{D}_{\operatorname{op}}\!\left(\Gamma^{x_n(m)}_{AB},\Gamma^{\mathcal{S}}_{AB}\right)\right],\label{eq:ec_chain_rule_apply_1_step}
\end{align}
where the first inequality follows from the data-processing inequality for the Umegaki relative entropy, the second inequality follows from~\eqref{eq:BS_ge_Umegaki}, and the last inequality follows from~\eqref{eq:ec_chain_rule_ing}. We can apply~\eqref{eq:ec_chain_rule_ing} to the first term on the right-hand side of~\eqref{eq:ec_chain_rule_apply_1_step} as well. Following this process sequentially and using the data-processing inequality, we arrive at the following:
\begin{equation}\label{eq:BS_rel_ent_le_avg_BS}
    \widehat{D}\!\left(\sigma^{(m)}_{\hat{M}}\middle\Vert\tau_{\hat{M}}\right) \le \sum_{i=0}^{n-1}\operatorname{Tr}\!\left[\left(\operatorname{Tr}_{R_{i+1}}\!\left[\sigma^{i,m}_{AR_{i+1}}\right]\right)^T\widehat{D}_{\operatorname{op}}\!\left(\Gamma^{x_{i+1}(m)}_{AB},\Gamma^{\mathcal{S}}_{AB}\right)\right] 
\end{equation}
Let $Q_A$ be an operator such that
\begin{equation}\label{eq:ub_BS_op_marg}
    Q_A \ge \operatorname{Tr}_B\!\left[\widehat{D}_{\operatorname{op}}\!\left(\Gamma^{x}_{AB},\Gamma^{\mathcal{S}}_{AB}\right)\right] \qquad \forall x\in \mathcal{X}.
\end{equation}
Then,
\begin{align}
    \widehat{D}\!\left(\sigma^{(m)}_{\hat{M}}\middle\Vert\tau_{\hat{M}}\right) &\le n\operatorname{Tr}\!\left[\left(\overline{\sigma}^{(m)}_A\right)^TQ_A\right]\\
    &= n\operatorname{Tr}\!\left[\overline{\sigma}^{(m)}_AQ^T_A\right],
\end{align}
where
\begin{equation}
    \overline{\sigma}^{(m)}_A \coloneqq \frac{1}{n}\sum_{i=0}^{n-1}\operatorname{Tr}_{R_{i+1}}\!\left[\sigma^{i,m}_{AR_{i+1}}\right]
\end{equation}
is the average-input state to read the message $m$ encoded in the channels. Note that the average-energy constraint implies that $\operatorname{Tr}\!\left[H_A\overline{\sigma}^{(m)}_A\right]\le E$ for every $m\in \mathcal{M}_n$. Therefore, for every $n\in \mathbb{N}$, every $m\in \mathcal{M}_n$, and every operator $Q$ that satisfies the inequality in~\eqref{eq:ub_BS_op_marg}, the following inequality holds:
\begin{equation}
     \frac{1}{n}\widehat{D}\!\left(\sigma^{(m)}_{\hat{M}}\middle\Vert\tau_{\hat{M}}\right) \le \sup_{\sigma\ge 0}\left\{\operatorname{Tr}\!\left[\sigma Q^T\right]: \operatorname{Tr}\!\left[\sigma\right] = 1, \operatorname{Tr}\!\left[H\sigma\right]\le E\right\}.\label{eq:}
\end{equation}
The right-hand side of the above inequality is a semidefinite program whose dual can be found using the Lagrange multiplier method as follows:
\begin{align}
    & \sup_{\sigma\ge 0}\left\{\operatorname{Tr}\!\left[\sigma Q^T\right]: \operatorname{Tr}\!\left[\sigma\right] = 1, \operatorname{Tr}\!\left[H\sigma\right]\le E\right\} \notag \\
    &= \sup_{\sigma \ge 0} \operatorname{Tr}\!\left[\sigma Q^T\right] + \inf_{\substack{\lambda \in \mathbb{R},\\ \mu\ge 0}}\lambda\!\left(1-\operatorname{Tr}\!\left[\sigma\right]\right) + \mu\!\left(E - \operatorname{Tr}\!\left[H\sigma\right]\right)\\
    &\le \inf_{\substack{\lambda \in \mathbb{R},\\ \mu\ge 0}}\sup_{\sigma \ge 0}\operatorname{Tr}\!\left[\sigma Q^T\right] + \lambda\!\left(1-\operatorname{Tr}\!\left[\sigma\right]\right) + \mu\!\left(E - \operatorname{Tr}\!\left[H\sigma\right]\right)\\
    &= \inf_{\substack{\lambda \in \mathbb{R},\\ \mu\ge 0}} \lambda + \mu E + \sup_{\sigma \ge 0} \operatorname{Tr}\!\left[\sigma\!\left(Q^T - \lambda I - \mu H\right)\right]\\
    &=\inf_{\lambda \in \mathbb{R},\mu\ge 0}\left\{\lambda + \mu E: Q^T \le \lambda I + \mu H\right\},\label{eq:ec_op_norm_dual}
\end{align}
where the inequality follows from the max-min inequality. As such,
\begin{equation}\label{eq:BS_ub_SDP_form}
    \frac{1}{n}\widehat{D}\!\left(\sigma^{(m)}_{\hat{M}}\middle\Vert\tau_{\hat{M}}\right) \le \inf_{\lambda \in \mathbb{R},\mu\ge 0}\left\{\lambda + \mu E: Q^T \le \lambda I + \mu H\right\} \qquad \forall n\in \mathbb{N},m\in \mathcal{M}_n,
\end{equation}
and consequently,
\begin{equation}\label{eq:Umegaki_ub_SDP_form}
    \frac{1}{n}D\!\left(\sigma^{(m)}_{\hat{M}}\middle\Vert\tau_{\hat{M}}\right) \le \inf_{\lambda \in \mathbb{R},\mu\ge 0}\left\{\lambda + \mu E: Q^T \le \lambda I + \mu H\right\} \qquad \forall n\in \mathbb{N},m\in \mathcal{M}_n.
\end{equation}
Since the transpose map is a positivity preserving map,
\begin{equation}
    \operatorname{Tr}_B\!\left[\widehat{D}_{\operatorname{op}}\!\left(\Gamma^{x}_{AB},\Gamma^{\mathcal{S}}_{AB}\right)\right] \le Q_A \iff \left(\operatorname{Tr}_B\!\left[\widehat{D}_{\operatorname{op}}\!\left(\Gamma^{x}_{AB},\Gamma^{\mathcal{S}}_{AB}\right)\right]\right)^T \le Q^T_A  .
\end{equation}
We can substitute~\eqref{eq:Umegaki_ub_SDP_form} into~\eqref{eq:n_mess_umegaki_ub} and infimize over every channel $\mathcal{S}_{A\to B}$ and every $Q$ that satisfies~\eqref{eq:ub_BS_op_marg} to arrive at the following inequality, which holds for all $n\in \mathbb{N}$ and $\varepsilon\in [0,1)$:
\begin{multline}\label{eq:read_cap_opt_ub_1}
    \frac{\log_2|\mathcal{M}_n|}{n} \le \inf_{Q_A,\Gamma^{\mathcal{S}}_{AB}\ge 0}\inf_{\lambda \in \mathbb{R},\mu\ge 0}\left\{
    \begin{array}{c}
         \frac{1}{1-\varepsilon}\left(\lambda + \mu E + \frac{h_2(\varepsilon)}{n}\right): \\
         \left(\operatorname{Tr}_B\!\left[\widehat{D}_{\operatorname{op}}\!\left(\Gamma^{x}_{AB},\Gamma^{\mathcal{S}}_{AB}\right)\right]\right)^T \le Q^T_A \le \lambda I + \mu H \qquad \forall x\in \mathcal{X},\\
         \operatorname{Tr}_B\!\left[\Gamma^{\mathcal{S}}_{AB}\right] = I_A
    \end{array}
    \right\},
\end{multline}
where we have used the fact that infimizing over channels is equivalent to infimizing over Choi operators due to the Choi isomorphism. Note that we can simply eliminate $Q_A$ in~\eqref{eq:read_cap_opt_ub_1}. Also note that we do not need to include a separate support condition on the Choi operators of $\mathcal{N}^x$ and $\mathcal{S}$. This is because if $\operatorname{supp}\!\left(\Gamma^x_{AB}\right) \not\subseteq \operatorname{supp}\!\left(\Gamma^{\mathcal{S}}\right)$, the value of the objective function in~\eqref{eq:read_cap_opt_ub_1} will be $+\infty$. Therefore, a finite optimal value is achieved only when $\operatorname{supp}\!\left(\Gamma^x_{AB}\right) \subseteq \operatorname{supp}\!\left(\Gamma^{\mathcal{S}}\right)$ for every $x\in \mathcal{X}$.  Finally, acknowledging that the inequality in~\eqref{eq:read_cap_opt_ub_1} holds for every protocol $\mathcal{P}^{(n)}$, every $n\in \mathbb{N}$ and every $\varepsilon \in [0,1)$, and taking the appropriate limits in the definition of the energy-constrained reading capacity of channels given in~\eqref{eq:ec_read_cap_defn}, we conclude the statement of the theorem.
\end{proof}

\medskip

The authors of~\cite{FS19} showed that the operator inequality $\widehat{D}_{\operatorname{op}}(X,Y) \le Q$ can be approximated by a set of semidefinite conditions on $X$, $Y$, and $Q$. This means that the right-hand side of~\eqref{eq:read_cap_opt_ub_1}, which is equal to the right-hand side of~\eqref{eq:ec_read_cap_ub}, can be approximated using a semidefinite program. Thus, we conclude that the upper bound on the energy-constrained reading capacity stated in Theorem~\ref{theo:ec_read_cap_ub} is efficiently computable. We present the explicit semidefinite program in Appendix~\ref{app:UB_SDP}.

\begin{remark}
    We switch from the Umegaki relative entropy to the Belavkin--Staszewski relative entropy in~\eqref{eq:Umegaki_to_BS} to make use of the chain rule given in~\eqref{eq:ec_chain_rule_ing}. The Umegaki relative entropy also has a chain rule~\cite{FFRS20}, but it involves the regularized Umegaki relative entropy of channels, which is not known to be efficiently computable in general. However, for some special classes of channels, e.g.~the class of jointly classical-quantum channels, the regularization can be dropped. In this case, one can obtain an upper bound on the energy-constrained reading capacity in terms of the Umegaki relative entropy itself, which can be significantly tighter than the bound presented in Theorem~\ref{theo:ec_read_cap_ub}. We discuss the case of classical-quantum channels later in~\ref{sec:cq_channels}.
\end{remark}

In the absence of energy constraints, the upper bound in Theorem~\ref{theo:ec_read_cap_ub} can be expressed in a much cleaner form, as the information radius of the set of channels induced by the Belavkin--Staszewski relative entropy. We state this formally in Corollary~\ref{cor:read_cap_ub} below.

\begin{corollary}\label{cor:read_cap_ub}
    The reading capacity of a set of channels $\mathcal{N}\coloneqq \left\{\mathcal{N}^x\right\}_{x\in \mathcal{X}}$ is bounded from above by the following:
    \begin{equation}
        R\!\left(\mathcal{N}\right) \le \inf_{\mathcal{S}\in \operatorname{CPTP}}\max_{x\in \mathcal{X}}\widehat{D}\!\left(\mathcal{N}^x\middle\Vert\mathcal{S}\right),
    \end{equation}
    where the infimum is over all quantum channels.
\end{corollary}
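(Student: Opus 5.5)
The plan is to obtain the corollary as the $E\to\infty$ specialization of Theorem~\ref{theo:ec_read_cap_ub}, and then identify the resulting optimization with the closed form \eqref{eq:BS_ch_closed_form}. Without an energy constraint, the constraint $\operatorname{Tr}[H_A\overline{\rho}^m_A]\le E$ in \eqref{eq:n_shot_ec_read_cap_defn} can be removed by choosing $H=0$ and any $E\ge 0$. Then $R(\mathcal{N})=R_{0,E}(\mathcal{N})$, since every protocol is feasible. With $H=0$, the term $\mu E$ can only increase the objective while $\mu H=0$ does not help feasibility, so we may set $\mu=0$. The bound of Theorem~\ref{theo:ec_read_cap_ub} then reads
\begin{equation}
R(\mathcal{N})\le \inf_{\Gamma^{\mathcal{S}}_{AB}\ge 0,\ \operatorname{Tr}_B[\Gamma^{\mathcal{S}}_{AB}]=I_A}\ \inf_{\lambda\in\mathbb{R}}\left\{\lambda:\left(\operatorname{Tr}_B\!\left[\widehat{D}_{\operatorname{op}}\!\left(\Gamma^{x}_{AB},\Gamma^{\mathcal{S}}_{AB}\right)\right]\right)^T\le \lambda I\ \ \forall x\in\mathcal{X}\right\}.
\end{equation}

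Next, for a fixed $\mathcal{S}$, I will evaluate the inner infimum over $\lambda$. The transpose preserves the spectrum of a Hermitian operator, and $\operatorname{Tr}_B[\widehat{D}_{\operatorname{op}}(\cdot,\cdot)]$ is Hermitian. Hence the constraint $Y_x^T\le\lambda I$ is equivalent to $\lambda_{\max}(Y_x)\le\lambda$, where $Y_x\coloneqq\operatorname{Tr}_B[\widehat{D}_{\operatorname{op}}(\Gamma^x_{AB},\Gamma^{\mathcal{S}}_{AB})]$. The infimum over $\lambda$ is therefore $\max_{x}\lambda_{\max}(Y_x)$.

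The main point to check is that $\lambda_{\max}(Y_x)$ coincides with $\Vert Y_x\Vert_\infty$ appearing in \eqref{eq:BS_ch_closed_form}. This holds when the top eigenvalue dominates the absolute value of the bottom one, which I expect to follow because $Y_x$ has nonnegative trace-weighted averages. Concretely, $\operatorname{Tr}[\rho^T Y_x]$ is the Belavkin--Staszewski divergence between $\mathcal{N}^x$ and $\mathcal{S}$ evaluated on a purification of $\rho$ (this is essentially the content of \cite{FF21}), and this divergence is nonnegative for states. If that identification causes difficulty, a cleaner route is available. Since we only need an upper bound, note that $\lambda_{\max}(Y_x)\le\Vert Y_x\Vert_\infty$ holds trivially. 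Thus $\max_x\lambda_{\max}(Y_x)\le\max_x\widehat{D}(\mathcal{N}^x\Vert\mathcal{S})$ by \eqref{eq:BS_ch_closed_form}, which suffices for the inequality claimed in the corollary.

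Finally, I will infimize over $\Gamma^{\mathcal{S}}$. By the Choi isomorphism, this is the same as infimizing over all channels $\mathcal{S}\in\operatorname{CPTP}$. Channels for which the support condition $\operatorname{supp}(\Gamma^x)\subseteq\operatorname{supp}(\Gamma^{\mathcal{S}})$ fails give $+\infty$ on both sides, consistent with the convention already used in the proof of the theorem. This yields $R(\mathcal{N})\le\inf_{\mathcal{S}}\max_x\widehat{D}(\mathcal{N}^x\Vert\mathcal{S})$.
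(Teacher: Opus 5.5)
Your proposal is correct and is essentially the paper's argument. The paper re-enters the proof of Theorem~\ref{theo:ec_read_cap_ub} at \eqref{eq:BS_rel_ent_le_avg_BS} and bounds each summand there directly by $\Vert Y_{x_{i+1}(m)}\Vert_\infty$, where $Y_x\coloneqq\operatorname{Tr}_B[\widehat{D}_{\operatorname{op}}(\Gamma^x_{AB},\Gamma^{\mathcal{S}}_{AB})]$. You instead apply the theorem's final statement with $H=0$ and $\mu=0$, and evaluate the $\lambda$-infimum as $\max_x\lambda_{\max}(Y_x)$; both routes rest on $\lambda_{\max}(Y_x)\le\Vert Y_x\Vert_\infty$ and the closed form \eqref{eq:BS_ch_closed_form}.

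Your fallback is all that is needed, but your expectation is also right: taking $\sigma_{RA}=\rho_{RA}$ in \eqref{eq:ec_chain_rule_ing} gives $\operatorname{Tr}[\rho_A^T Y_x]\ge 0$ for every state, so $Y_x\ge 0$ and $\lambda_{\max}(Y_x)=\Vert Y_x\Vert_\infty$.
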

\begin{proof}
    We can rewrite~\eqref{eq:BS_rel_ent_le_avg_BS} as follows:
    \begin{align}
        \widehat{D}\!\left(\sigma^{(m)}_{\hat{M}}\middle\Vert\tau_{\hat{M}}\right) &\le \sum_{i=0}^{n-1}\operatorname{Tr}\!\left[\left(\operatorname{Tr}_{R_{i+1}}\!\left[\sigma^{i,m}_{AR_{i+1}}\right]\right)^T\widehat{D}_{\operatorname{op}}\!\left(\Gamma^{x_{i+1}(m)}_{AB},\Gamma^{\mathcal{S}}_{AB}\right)\right] \\
        &\le \sum_{i=0}^{n-1}\left\Vert\operatorname{Tr}_B\!\left[\widehat{D}_{\operatorname{op}}\!\left(\Gamma^{x_{i+1}(m)}_{AB},\Gamma^{\mathcal{S}}_{AB}\right)\right]\right\Vert_{\infty}\\
        &= \sum_{i=0}^{n-1}\widehat{D}\!\left(\mathcal{N}^{x_{i+1}}_{A\to B}\middle\Vert\mathcal{S}_{A\to B}\right)\\
        &\le \max_{x\in \mathcal{X}}n\widehat{D}\!\left(\mathcal{N}^x_{A\to B}\middle\Vert\mathcal{S}_{A\to B}\right),
    \end{align}
    where the second inequality follows from the fact that $\left(\operatorname{Tr}_{R_{i+1}}\!\left[\sigma^{i,m}_{AR_{i+1}}\right]\right)^T$ is a unit-trace positive semidefinite operator and the equality follows from~\eqref{eq:BS_ch_closed_form}. We know from~\eqref{eq:n_mess_umegaki_ub}, combined with~\eqref{eq:BS_ge_Umegaki}, that
    \begin{equation}
    \max_{m\in \mathcal{M}_n}\frac{1}{n(1-\varepsilon)}\!\left(\widehat{D}\!\left(\sigma^{(m)}_{\hat{M}}\middle\Vert\tau_{\hat{M}}\right) + h_2(\varepsilon)\right)   
    \end{equation}
    is an upper bound on the reading rate for a protocol that generates the state $\sigma^{(m)}_{\hat{M}}$ corresponding to the message $m\in \mathcal{M}_n$ and any worst-case error tolerance $\varepsilon \in [0,1)$. Therefore, for every reading protocol $\mathcal{P}^{(n)}$, number of channel uses $n \in \mathbb{N}$, and worst-case error-tolerance $\varepsilon \in [0,1)$, the reading rate from the set of channels $\mathcal{N}$ is bounded from above by the following:
    \begin{equation}
        \frac{\log_2|\mathcal{M}_n|}{n} \le \max_{x\in \mathcal{X}}\frac{1}{1-\varepsilon}\left(\widehat{D}\!\left(\mathcal{N}^x_{A\to B}\middle\Vert\mathcal{S}_{A\to B}\right)+\frac{h_2(\varepsilon)}{n}\right)
    \end{equation}
    for every channel $\mathcal{S}_{A\to B}$. The statement of the corollary then immediately follows from the definition of reading capacity.
\end{proof}

\medskip

\subsection{Lower bound on the energy-constrained reading capacity}

Now let us turn our attention to bounding the energy-constrained reading capacity of a set of channels from below. The authors of~\cite{DW19} showed that 
\begin{equation}\label{eq:DW_lb}
    R_{H,E}\!\left(\left\{\mathcal{N}^x_{A\to B}\right\}_{x\in \mathcal{X}}\right) \ge \sup_{p(x)}\sup_{\substack{\psi_{RA},\\ \operatorname{Tr}[H_A\psi_A] \le E}}I(X;RB)_{\omega},
\end{equation}
where $I(X:RB)_{\omega} \coloneqq S(\omega_X) + S(\omega_{RB}) - S(\omega_{XRB})$ is the mutual information of the state, $S(\rho) \coloneqq -\operatorname{Tr}[\rho\log_2\rho]$ is the von Neumann entropy, and
\begin{equation}
    \omega_{XRB} \coloneqq \sum_{x\in \mathcal{X}}p(x)|x\rangle\!\langle x|_X\otimes \mathcal{N}^x_{A\to B}\!\left(\psi_{RA}\right).
\end{equation}
The supremum in~\eqref{eq:DW_lb} is over every probability distribution $p(x)$ and every pure state $\psi_{RA}$ with energy less than or equal to $E$. This rate is achieved by a non-adaptive strategy, where each memory cell is probed by the same state $\psi_{RA}$.

While the right-hand side  of~\eqref{eq:DW_lb} is not efficiently computable, the inequality in~\eqref{eq:DW_lb} can be used to obtain a weaker but efficiently computable lower bound on the energy-constrained reading capacity. One can simply pick a probability distribution $p(x)$ and an input state $\psi_{RA}$, and the resulting mutual information $I(X;RB)_{\omega}$ is a lower bound on the energy-constrained reading capacity of the set of channels $\left\{\mathcal{N}^x_{A\to B}\right\}_{x\in \mathcal{X}}$.

Here, we simply rewrite the bound in terms of the Umegaki relative entropy of states. The Umegaki relative entropy of states can be approximated by a semidefinite program, as shown in~\cite{KS26}, which in turn allows us to obtain a bilinear programming lower bound on the energy-constrained reading capacity of a set of channels. The bilinear program is not efficiently computable, but similar to the lower bound from~\cite{DW19}, any feasible point of the bilinear program yields a valid lower bound on the energy-constrained reading capacity. The advantage of rewriting the bound in this way is that one can use algorithms like the see-saw algorithm to estimate a reasonable, sometimes tight, approximation of the value of the bilinear program, excluding the need to pick a ``good'' $p(x)$ and $\psi_{RA}$ in~\eqref{eq:DW_lb}. 

Let us first present the lower bound on the energy-constrained reading capacity as an optimization involving the Umegaki relative entropy of states.

\begin{proposition}\label{theo:read_cap_lb}
    The reading capacity of a set of channels $\mathcal{N}\coloneqq \left\{\mathcal{N}^x_{A\to B}\right\}_{x\in \mathcal{X}}$ is bounded from below by the following:
    \begin{equation}\label{eq:read_cap_lb}
        R_{H,E}\!\left(\mathcal{N}\right) \ge \sup_{\left\{p(x)\right\}_{x\in \mathcal{X}}}\sup_{\substack{\psi_{RA},\\ \operatorname{Tr}[H_A\psi_A] \le E}} \sum_{x\in \mathcal{X}}p(x)D\!\left(\mathcal{N}^x_{A\to B}\!\left(\psi_{RA}\right)\middle\Vert\overline{\mathcal{N}}_{A\to B}\!\left(\psi_{RA}\right)\right),
    \end{equation}
    where $\overline{\mathcal{N}}_{A\to B} \coloneqq \sum_{x\in \mathcal{X}}p(x)\mathcal{N}^x_{A\to B}$ the maximization is over all probability distributions over the set $\mathcal{X}$ and all pure states $\psi_{RA}$ satisfying $|R| = |A|$.
\end{proposition}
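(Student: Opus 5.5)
The plan is to reduce the statement to the achievability result of~\cite{DW19} quoted in~\eqref{eq:DW_lb}, by showing that the objective in~\eqref{eq:read_cap_lb} equals the mutual information $I(X;RB)_\omega$. The only extra point is the restriction $|R| = |A|$, which costs nothing because every pure $\psi_{RA}$ has Schmidt rank at most $|A|$.

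First, fix a distribution $p(x)$ and a pure state $\psi_{RA}$ with $\operatorname{Tr}[H_A\psi_A]\le E$. Define $\omega_{XRB}=\sum_x p(x)|x\rangle\!\langle x|_X\otimes\mathcal{N}^x(\psi_{RA})$ as in~\eqref{eq:DW_lb}. Its marginal on $RB$ is $\omega_{RB}=\overline{\mathcal{N}}(\psi_{RA})$, and $\omega_X\otimes\omega_{RB}=\sum_x p(x)|x\rangle\!\langle x|_X\otimes\overline{\mathcal{N}}(\psi_{RA})$. The standard identity $I(X;RB)_\omega = D(\omega_{XRB}\Vert\omega_X\otimes\omega_{RB})$ holds; it follows by expanding the logarithm of a tensor product and using $\operatorname{supp}(\omega_{XRB})\subseteq\operatorname{supp}(\omega_X\otimes\omega_{RB})$. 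Applying the direct-sum property~\eqref{eq:Umegaki_direct_sum} then gives
\begin{equation}
I(X;RB)_\omega=\sum_{x\in\mathcal{X}}p(x)\,D\!\left(\mathcal{N}^x(\psi_{RA})\middle\Vert\overline{\mathcal{N}}(\psi_{RA})\right).
\end{equation}
Each term in this sum is finite, since $p(x)\mathcal{N}^x(\psi_{RA})\le\overline{\mathcal{N}}(\psi_{RA})$, which yields the required support inclusion.

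Second, the reference dimension. By the Schmidt decomposition, any pure $\psi_{RA}$ is supported on $\mathcal{H}_{R'}\otimes\mathcal{H}_A$, where $R'$ is a subspace of $R$ with $\dim R'\le|A|$. We embed $R'$ isometrically into a system of dimension $|A|$. Both the objective and the marginal $\psi_A$ (and hence the energy) are invariant under local isometries on $R$, by isometric invariance of $D$. Therefore the supremum in~\eqref{eq:DW_lb} over pure states with arbitrary $R$ equals the supremum restricted to $|R|=|A|$.

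Combining the two steps with~\eqref{eq:DW_lb} and taking the suprema over $p(x)$ and $\psi_{RA}$ yields~\eqref{eq:read_cap_lb}. The main content is in~\eqref{eq:DW_lb} itself, which we take as given from~\cite{DW19}: it is proved by a non-adaptive random-coding scheme in which every cell is probed by the same $\psi_{RA}$, and the energy constraint then holds exactly for every message. The remaining manipulations are routine; the only care needed is to check the support conditions so that every relative entropy is finite.
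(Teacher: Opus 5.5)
Your proposal is correct and is essentially the paper's argument. Both identify the objective with the Holevo information (equivalently $I(X;RB)_\omega$) of the ensemble $\{p(x),\mathcal{N}^x(\psi_{RA})\}$ achieved by non-adaptive product probing, and both handle the reference dimension by isometric invariance, though only the trivial direction is needed: restricting to $|R|=|A|$ can only decrease the supremum. The differences are cosmetic: you invoke \eqref{eq:DW_lb} together with the direct-sum form of $D(\omega_{XRB}\Vert\omega_X\otimes\omega_{RB})$, whereas the paper cites the HSW theorem directly and rewrites the Holevo quantity as $\sum_x p(x)D(\rho^x\Vert\overline{\rho})$.
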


\begin{proof}
    Consider a non-adaptive reading protocol for a set of channels $\mathcal{N}\coloneqq \left\{\mathcal{N}^x\right\}_{x\in \mathcal{X}}$, where the reader uses a pure tensor-product probe state $\psi^{\otimes n}_{RA}$ to read the message $m$ encoded as $\bigotimes_{i=1}^n\mathcal{N}^{x_i(m)}_{A\to B}$, followed by a joint measurement. The probe state satisfies the energy constraint $\operatorname{Tr}\!\left[H_A\psi_A\right]\le E$. In principle, this is a protocol for transmitting classical information to a receiver using $n$ signal states from the set $\left\{\mathcal{N}^{x}_{A\to B}\!\left(\psi_{RA}\right)\right\}_x$. It has been shown in~\cite{SW97,Hol98} that the maximum rate at which classical information can be transmitted using a set of signal states $\left\{\rho^x\right\}_{x\in \mathcal{X}}$ reliably, that is the error vanishes with asymptotically large number of signals, with a priori probability distribution $\left\{p(x)\right\}_{x\in \mathcal{X}}$, is equal to the following:
    \begin{equation}
        C\!\left(\left\{p(x)\right\}_{x\in \mathcal{X}}, \left\{\rho^x\right\}_{x\in \mathcal{X}}\right) = S\!\left(\sum_{x\in \mathcal{X}}p(x)\rho^x\right) - \sum_{x\in \mathcal{X}}p(x)S\!\left(\rho^x\right).
    \end{equation}
    Note that the above expression can be written as follows~\cite[Proposition~7.14]{KW20}:
    \begin{equation}
        C\!\left(\left\{p(x)\right\}_{x\in \mathcal{X}}, \left\{\rho^x\right\}_{x\in \mathcal{X}}\right) = \sum_{x\in \mathcal{X}}p(x)D\!\left(\rho^x\Vert\overline{\rho}\right),
    \end{equation}
    where $\overline{\rho}\coloneqq \sum_{x\in \mathcal{X}}p(x)\rho^x$ and $D(\cdot\Vert\cdot)$ is the Umegaki relative entropy. It is then straightforward to see that if Alice encodes a message using the set of $\mathcal{N}$ with \emph{a priori} probability distribution $\left\{p(x)\right\}_{x\in \mathcal{X}}$, then the reading capacity of $\mathcal{N}$ is bounded from below by the following:
    \begin{equation}
        R_{H,E}\!\left(\mathcal{N}\right) \ge C\!\left(\left\{p(x)\right\}_{x\in \mathcal{X}}, \left\{\mathcal{N}^x_{A\to B}\!\left(\psi_{RA}\right)\right\}_{x\in \mathcal{X}}\right)
    \end{equation}
    Since the above inequality holds for every probability distribution $\left\{p(x)\right\}_{x\in \mathcal{X}}$ and every pure state $\psi_{RA}$ that satisfies $\operatorname{Tr}\!\left[H_A\psi_A\right]\le E$, we can conclude that
    \begin{align}
        R_{H,E}\!\left(\mathcal{N}\right) &\ge \sup_{\left\{p(x)\right\}_{x\in \mathcal{X}}}\sup_{\substack{\psi_{RA},\\ \operatorname{Tr}[H_A\psi_A]\le E}}C\!\left(\left\{p(x)\right\}_{x\in \mathcal{X}}, \left\{\mathcal{N}^x_{A\to B}\!\left(\psi_{RA}\right)\right\}_{x\in \mathcal{X}}\right)\\
        &= \sup_{\left\{p(x)\right\}_{x\in \mathcal{X}}}\sup_{\substack{\psi_{RA},\\ \operatorname{Tr}[H_A\psi_A]\le E}}\sum_{x\in \mathcal{X}}p(x)D\!\left(\mathcal{N}^x_{A\to B}(\psi_{RA})\middle\Vert\sum_{x\in \mathcal{X}}p(x)\mathcal{N}^x_{A\to B}(\psi_{RA})\right)\\
        &= \sup_{\left\{p(x)\right\}_{x\in \mathcal{X}}}\sup_{\substack{\psi_{RA},\\ \operatorname{Tr}[H_A\psi_A]\le E}}\sum_{x\in \mathcal{X}}p(x)D\!\left(\mathcal{N}^x_{A\to B}(\psi_{RA})\middle\Vert\overline{\mathcal{N}}_{A\to B}(\psi_{RA})\right),\label{eq:read_cap_lb_avg_ch}
    \end{align}
    where $\overline{\mathcal{N}}_{A\to B} \coloneqq \sum_{x\in \mathcal{X}}p(x)\mathcal{N}^x_{A\to B}$ and the supremum is over every probability distribution $\left\{p(x)\right\}_{x\in \mathcal{X}}$ and pure state $\psi_{RA}$ satisfying $\operatorname{Tr}\!\left[H_A\psi_A\right]\le E$. Since the Umegaki relative entropy is invariant under isometries and every purification of $\psi_A$ is related to the canonical purification via an isometry on the reference system, we can fix $|R| = |A|$, arriving at the statement of the proposition.
    \end{proof}

    \medskip

    Now we will relax the lower bound in Proposition~\ref{theo:read_cap_lb} to a bilinear program. First, consider the following integral representation of the Umegaki relative entropy discovered in~\cite{Jen24}:
    \begin{equation}\label{eq:rel_ent_int_rep}
        D(\rho\Vert\sigma) = \frac{1}{\ln 2}\left[\int_{\mu}^{\lambda}\frac{ds}{s}\operatorname{Tr}_{+}\!\left[\sigma s - \rho\right] + \ln \lambda + 1 - \lambda\right],
     \end{equation}
     which holds for every pair of states $\rho$ and $\sigma$ if $\mu \sigma \le \rho \le \lambda \sigma$. In~\eqref{eq:rel_ent_int_rep}, $\operatorname{Tr}_{+}[\tau]$ refers to the sum of the positive eigenvalues of the operator $\tau$. The authors of~\cite{KS26}  devised a hierarchy of semidefinite programs that estimate the Umegaki relative entropy using a discretized version of~\eqref{eq:rel_ent_int_rep} and the fact that $\operatorname{Tr}_+[\tau] = \sup_{0\le P\le I}\operatorname{Tr}[P\tau]$. Their results were extended to obtain semidefinite programming hierarchies for the Umegaki relative entropy of channels in~\cite{KW26}. Using their approach, one can obtain a bilinear programming lower bound on the energy-constrained reading capacity of a finite set of channels. We report this program in Appendix~\ref{app:LB_BP}.

\subsection{Reading capacity of jointly classical-quantum channels}\label{sec:cq_channels}

In~\cite{Pascual26,PW26}, the authors obtained a single-letter expression for the reading capacity of a set of jointly classical-quantum channels. Furthermore, they showed that the reading capacity of a set of jointly classical-quantum channels can be achieved by a non-adaptive strategy. In this section, we show that the same conclusion can be reached by using our techniques.

A classical-quantum channel is defined as a channel that acts on an arbitrary input state $\rho_{RA}$ as follows:
\begin{equation}\label{eq:classical_quantum}
    \mathcal{N}_{A\to B}\!\left(\rho_{RA}\right) = \sum_{y\in \mathcal{Y}}\langle y|_A\rho_{RA}|y\rangle_A\otimes \tau^y_B,
\end{equation}
where $\left\{\tau^y\right\}_{y\in \mathcal{Y}}$ is a set of states. A set of channels are said to be jointly classical-quantum, if their action on an arbitrary state is given by~\eqref{eq:classical_quantum} with respect to a common eigenbasis $\left\{|y\rangle>\right\}_{y\in \mathcal{Y}}$. That is, $\left\{\mathcal{N}^x_{A\to B}\right\}_{x\in \mathcal{X}}$ is a set of jointly classical-quantum states if their action on an arbitrary state $\rho_{AB}$ can be expressed as follows:
\begin{equation}
    \mathcal{N}^x_{A\to B}\!\left(\rho_{RA}\right) = \sum_{y\in \mathcal{Y}}\langle y|_A\rho_{RA}|y\rangle_A\otimes \tau^{x,y}_{B},
\end{equation}
where $\tau^{x,y}_B$ is a state for every $x\in \mathcal{X},y\in \mathcal{Y}$.

It was shown in~\cite[Lemma 25]{WBHK20} that for a pair of jointly classical-quantum channels, the Umegaki relative entropy also follows a chain rule similar to the chain rule for the Belavkin--Staszewski relative entropy stated in~\eqref{eq:ec_chain_rule_ing}. In particular, let $\mathcal{N}$ and $\mathcal{M}$ be two classical-quantum channels that transform the classical label $y$ into states $\tau^y$ and $\omega^y$, respectively. Then, 
\begin{equation}\label{eq:Umegaki_chain_rule}
    D\!\left(\mathcal{N}_{A\to B}\!\left(\rho_{RA}\right)\middle\Vert\mathcal{M}_{A\to B}\!\left(\sigma_{RA}\right)\right) \le D\!\left(\rho_{RA}\middle\Vert\sigma_{RA}\right) + \sum_{y\in \mathcal{Y}}\operatorname{Tr}\!\left[\langle y|_A\rho_{RA}|y\rangle_A\right]D\!\left(\tau^y_B\middle\Vert\omega^y_B\right).
\end{equation}
This allows us to obtain a tighter upper bound on the energy-constrained reading capacity of channels compared to the bound in Theorem~\ref{theo:ec_read_cap_ub}, since we do not need to switch from the Umegaki relative entropy to the Belavkin--Staszewski relative entropy in order to make use of the chain rule. We state this new upper bound on the energy-constrained reading capacity for this in Proposition~\ref{prop:cq_channel_read_cap_ub} below.

\begin{proposition}\label{prop:cq_channel_read_cap_ub}
    Let $\mathcal{N}\coloneqq \left\{\mathcal{N}^x_{A\to B}\right\}_{x\in \mathcal{X}}$ be a set of jointly classical-quantum channels such that
    \begin{equation}
        \mathcal{N}^x_{A\to B}\!\left(\cdot\right) = \sum_{y\in \mathcal{Y}}\langle y|_A(\cdot)|y\rangle_A\otimes \omega^{x,y}_B.
    \end{equation}
    Then the energy-constrained reading capacity of $\mathcal{N}$ is bounded from above by the following:
    \begin{equation}\label{eq:ec_read_cap_cq_ub}
        R_{H,E}\!\left(\mathcal{N}\right) \le \inf_{\mathcal{S}_{A\to B}}\sup_{\substack{\rho_{RA},\\ \operatorname{Tr}[H_A\rho_A] \le E}}\sum_{y\in \mathcal{Y}} p(y)\max_{x\in \mathcal{X}}\left\{D\!\left(\omega^{x,y}_B\middle\Vert\mathcal{S}_{A\to B}(|y\rangle\!\langle y|_A)\right)\right\},
    \end{equation}
    where $p(y) \coloneqq \operatorname{Tr}\!\left[\langle y|_A\rho_{RA}|y\rangle_A\right]$, the infimum is taken over every channel $\mathcal{S}$ that is jointly classical-quantum with the channels $\left\{\mathcal{N}^x\right\}$, and the supremum is taken over all bipartite states with the energy constraint $\operatorname{Tr}\!\left[H_A\rho_A\right]\le E$.
\end{proposition}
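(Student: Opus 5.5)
We follow the proof of Theorem~\ref{theo:ec_read_cap_ub} verbatim up to~\eqref{eq:Before_umegaki_to_BS}, but with the comparison channel $\mathcal{S}_{A\to B}$ now restricted to be jointly classical--quantum with $\{\mathcal{N}^x\}$. That is, we take
\begin{equation}
\mathcal{S}(\cdot)=\sum_{y\in\mathcal{Y}}\langle y|(\cdot)|y\rangle\otimes\gamma^y_B, \qquad \gamma^y_B=\mathcal{S}(|y\rangle\!\langle y|_A).
\end{equation}
The hypothesis-testing argument, \eqref{eq:hypo_test_vs_umegaki}, and the direct-sum property~\eqref{eq:Umegaki_direct_sum} go through unchanged, since they never used the form of $\mathcal{S}$. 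They give~\eqref{eq:read_rate_ub_Umegaki}, namely the bound $\frac{1}{n(1-\varepsilon)}\bigl(\max_m D(\sigma^{(m)}_{\hat{M}}\Vert\tau_{\hat{M}})+h_2(\varepsilon)\bigr)$. Data processing under $\mathcal{L}$ then reduces the task to bounding $D(\sigma^{n,m}_{BR_n}\Vert\tau^n_{BR_n})$. We deliberately skip the step~\eqref{eq:Umegaki_to_BS} and stay with the Umegaki relative entropy.

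Next we apply the classical--quantum chain rule~\eqref{eq:Umegaki_chain_rule} to the pair $\mathcal{N}^{x_n(m)}$ and $\mathcal{S}$, which is legitimate because the two are jointly classical--quantum. This gives
\begin{equation}
D\!\left(\sigma^{n,m}_{BR_n}\middle\Vert\tau^n_{BR_n}\right)\le D\!\left(\sigma^{n-1,m}_{AR_n}\middle\Vert\tau^{n-1}_{AR_n}\right)+\sum_{y}p_{n-1,m}(y)\,D\!\left(\omega^{x_n(m),y}_B\middle\Vert\gamma^y_B\right),
\end{equation}
where $p_{i,m}(y)=\langle y|\operatorname{Tr}_{R_{i+1}}[\sigma^{i,m}_{AR_{i+1}}]|y\rangle$. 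Data processing under $\mathcal{A}^{n-1}$ bounds the first term by $D(\sigma^{n-1,m}_{BR_{n-1}}\Vert\tau^{n-1}_{BR_{n-1}})$. Iterating down to $\sigma^{0,m}=\tau^0=\rho_{AR_1}$, where the divergence vanishes, yields
\begin{equation}
D\!\left(\sigma^{(m)}_{\hat{M}}\middle\Vert\tau_{\hat{M}}\right)\le\sum_{i=0}^{n-1}\sum_y p_{i,m}(y)\max_{x}D\!\left(\omega^{x,y}_B\middle\Vert\gamma^y_B\right).
\end{equation}

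The summand is linear in $p_{i,m}(y)$, so the total equals $n\sum_y \bar p_m(y)\max_x D(\omega^{x,y}\Vert\gamma^y)$. Here $\bar p_m(y)=\langle y|\overline{\sigma}^{(m)}_A|y\rangle$ is the diagonal of the average probe state, which by hypothesis satisfies $\operatorname{Tr}[H\overline{\sigma}^{(m)}_A]\le E$. Taking any purification (or just $\rho_{RA}$ with marginal $\overline{\sigma}^{(m)}_A$), this is at most $n$ times the supremum in~\eqref{eq:ec_read_cap_cq_ub} for the fixed $\mathcal{S}$. Dividing by $n(1-\varepsilon)$, letting $n\to\infty$ and then $\varepsilon\to0$, and finally infimizing over jointly classical--quantum $\mathcal{S}$ gives the claim.

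The main obstacle is this linearity step. We must check that averaging over the $n$ rounds commutes with the energy constraint, which it does because both are linear in the marginal states. We also need to handle support issues: if some $\operatorname{supp}\omega^{x,y}\not\subseteq\operatorname{supp}\gamma^y$ with $p(y)>0$, the right-hand side is $+\infty$ and the bound is trivial.
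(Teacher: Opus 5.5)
Your proposal is correct and follows essentially the same route as the paper: rerun the Theorem~\ref{theo:ec_read_cap_ub} argument up to the Umegaki data-processing step~\eqref{eq:Before_umegaki_to_BS} with a jointly classical--quantum comparison channel $\mathcal{S}$, then apply the chain rule~\eqref{eq:Umegaki_chain_rule} recursively together with data processing under each $\mathcal{A}^i$. The paper then bounds the result through the diagonal of the average probe state $\overline{\sigma}^{(m)}_A$ under the energy constraint, as you do, so your direct comparison with the supremum in~\eqref{eq:ec_read_cap_cq_ub} matches its concluding step.
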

\begin{proof}
    Let us refer back to the proof of Theorem~\ref{theo:ec_read_cap_ub} and follow all the steps up to~\eqref{eq:Before_umegaki_to_BS}, with the additional assumption that the channel $\mathcal{S}_{A\to B}$ is also a classical-quantum channel with respect to the basis $\left\{|y_A\rangle\right\}_{y\in \mathcal{Y}}$. Applying~\eqref{eq:Umegaki_chain_rule} to~\eqref{eq:Before_umegaki_to_BS}, we arrive at the following inequality:
    \begin{align}
        D\!\left(\sigma^{(m)}_{\hat{M}}\middle\Vert\tau_{\hat{M}}\right) &\le D\!\left(\sigma^{n-1,m}_{AR_n}\middle\Vert\tau^{n-1}_{AR_n}\right) + \sum_{y\in \mathcal{Y}}\operatorname{Tr}\!\left[\langle y|_A\sigma^{n-1,m}_{AR_n}|y\rangle_A\right]D\!\left(\omega^{x_n(m),y}_B\middle\Vert\mathcal{S}(|y\rangle\!\langle y|_A)\right)\\
        &\le D\!\left(\sigma^{n-1,m}_{AR_n}\middle\Vert\tau^{n-1}_{AR_n}\right) + \sum_{y\in \mathcal{Y}}\operatorname{Tr}\!\left[\langle y|_A\sigma^{n-1,m}_{AR_n}|y\rangle_A\right]\max_{x\in\mathcal{X}}\left\{D\!\left(\omega^{x,y}_B\middle\Vert\mathcal{S}(|y\rangle\!\langle y|_A)\right)\right\}\\
        &\le \sum_{y\in \mathcal{Y}}\langle y|_A\!\left(\sum_{i=0}^{n-1}\operatorname{Tr}_{R_{i+1}}\!\left[\sigma^{i,m}_{AR_{i+1}}\right]\right)|y\rangle_A\max_{x\in \mathcal{X}}\left\{D\!\left(\omega^{x,y}_B\middle\Vert\mathcal{S}(|y\rangle\!\langle y|_A)\right)\right\},
    \end{align}
    where the last inequality follows from recursively using the chain rule for the Umegaki relative entropy for jointly classical-quantum channels.

    Let us define 
    \begin{equation}
        \overline{\sigma}^{(m)}_{A} \coloneqq \frac{1}{n}\sum_{i=0}^{n-1}\sigma^{i,m}_{A}.
    \end{equation}
    Then, the following inequality holds for all $n\in \mathbb{N}$ and every message $m\in\mathcal{M}_n$:
    \begin{equation}
        \frac{1}{n}D\!\left(\sigma^{(m)}_{\hat{M}}\middle\Vert\tau_{\hat{M}}\right) \le \sum_{y\in \mathcal{Y}}\langle y|\overline{\sigma}^{(m)}_{A}|y\rangle\max_{x\in \mathcal{X}}\left\{ D\!\left(\omega^{x,y}_B\middle\Vert\mathcal{S}(|y\rangle\!\langle y|_A)\right)\right\}.
    \end{equation}
    Now, using the energy constraint $\operatorname{Tr}\!\left[H_A\overline{\sigma}^{(m)}_A\right]\le E$ and simply invoking the arguments from the proof of Theorem~\ref{theo:ec_read_cap_ub}, we conclude~\eqref{eq:ec_read_cap_cq_ub}.
\end{proof}

\medskip

In Proposition~\ref{prop:cq_channel_read_cap_ub}, the energy constraint determines which probability distributions $p(y)$ are feasible. In the absence of the energy constraint, all probability distributions are feasible, and hence, the supremum over states is equivalent to a supremum over every probability distribution on the alphabet $\mathcal{Y}$. This leads to an information-radius type bound, similar to the bound in Corollary~\ref{cor:read_cap_ub}, which we state in Corollary~\ref{cor:cq_read_cap_ub} below.

\begin{corollary}\label{cor:cq_read_cap_ub}
    Let $\mathcal{N}\coloneqq \left\{\mathcal{N}^x_{A\to B}\right\}_{x\in \mathcal{X}}$ be a set of jointly classical-quantum channels.
    The reading capacity of $\mathcal{N}$ is bounded from above by the following:
    \begin{equation}\label{eq:cq_read_cap_ub}
        R\!\left(\mathcal{N}\right) \le \inf_{\mathcal{S}_{A\to B}}\max_{x\in \mathcal{X}}D\!\left(\mathcal{N}^x_{A\to B}\middle\Vert\mathcal{S}_{A\to B}\right),
    \end{equation}
    where the infimum is over every channel $\mathcal{S}$ that is jointly classical-quantum with the set $\mathcal{N}$.
\end{corollary}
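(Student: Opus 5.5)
The plan is to specialize Proposition~\ref{prop:cq_channel_read_cap_ub} to the unconstrained case, then identify the resulting quantity with the Umegaki channel divergence. Dropping the energy constraint amounts to taking $H=0$, or equivalently $E\to\infty$; in that case $R_{H,E}(\mathcal{N})$ coincides with $R(\mathcal{N})$. Proposition~\ref{prop:cq_channel_read_cap_ub} then gives
\begin{equation}
R(\mathcal{N}) \le \inf_{\mathcal{S}}\sup_{\rho_{RA}}\sum_{y\in\mathcal{Y}} p(y)\max_{x\in\mathcal{X}} D\!\left(\omega^{x,y}_B\middle\Vert\mathcal{S}(|y\rangle\!\langle y|_A)\right),
\end{equation}
where the infimum runs over channels $\mathcal{S}$ that are jointly classical--quantum with $\mathcal{N}$. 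The distribution $p(y)=\langle y|\rho_A|y\rangle$ can be any distribution on $\mathcal{Y}$; for instance, $\rho_A=|y\rangle\!\langle y|$ realizes a point mass. The objective is linear in $p$, so the supremum is attained at a point mass, and the right-hand side becomes $\inf_{\mathcal{S}}\max_{y}\max_{x} D(\omega^{x,y}_B\Vert\omega^{\mathcal{S},y}_B)$, where $\omega^{\mathcal{S},y}_B\coloneqq\mathcal{S}(|y\rangle\!\langle y|_A)$.

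The remaining task, and the main step, is to show that for jointly classical--quantum channels $\max_y D(\omega^{x,y}\Vert\omega^{\mathcal{S},y}) = D(\mathcal{N}^x\Vert\mathcal{S})$. Once this holds, the two maxima over $x$ and $y$ swap freely and the bound becomes exactly \eqref{eq:cq_read_cap_ub}.

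For the inequality $\le$, feed the product input $|y\rangle\!\langle y|_A$ with a trivial reference into the channel-divergence definition \eqref{eq:def-gen-div}; the resulting output states are exactly $\omega^{x,y}$ and $\omega^{\mathcal{S},y}$.

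For $\ge$, take an arbitrary $\rho_{RA}$. Both outputs have the block form
\begin{equation}
\sum_{y} p(y)\,\rho^y_R\otimes\omega^{x,y}_B \quad\text{and}\quad \sum_{y} p(y)\,\rho^y_R\otimes\omega^{\mathcal{S},y}_B,
\end{equation}
with $\rho^y_R \coloneqq \langle y|\rho_{RA}|y\rangle/p(y)$. Neither output is block diagonal in $R$, so the direct-sum property cannot be applied to them directly. Instead, apply data processing to the channel that appends a register $Y$ holding $|y\rangle\!\langle y|$ in front of each term. Equivalently, observe that both outputs are images of the classical--quantum states $\sum_y p(y)|y\rangle\!\langle y|_Y\otimes\rho^y_R\otimes\omega^{\cdot,y}_B$ under the partial trace over $Y$.

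Data processing under this partial trace, followed by the direct-sum property \eqref{eq:Umegaki_direct_sum}, gives
\begin{equation}
D\!\left(\mathcal{N}^x(\rho_{RA})\middle\Vert\mathcal{S}(\rho_{RA})\right)\le\sum_y p(y)\,D\!\left(\rho^y_R\otimes\omega^{x,y}\middle\Vert\rho^y_R\otimes\omega^{\mathcal{S},y}\right)=\sum_y p(y)\,D\!\left(\omega^{x,y}\middle\Vert\omega^{\mathcal{S},y}\right),
\end{equation}
and the right-hand side is at most $\max_y D(\omega^{x,y}\Vert\omega^{\mathcal{S},y})$. Taking the supremum over $\rho_{RA}$ yields the claimed equality.

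One detail to check is that the reduction to $H=0$ is legitimate. With $H=0$, the constraint $\operatorname{Tr}[H\bar\rho]\le E$ holds for any $E\ge0$, so $R_{0,E}(\mathcal{N})=R(\mathcal{N})$ directly from \eqref{eq:n_shot_ec_read_cap_defn}. Support issues need no separate treatment, since infinite values only make the bound trivially true.
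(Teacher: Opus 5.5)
Your proposal is correct and follows the paper's proof: specialize Proposition~\ref{prop:cq_channel_read_cap_ub} to the unconstrained case, reduce the supremum over $p(y)$ to a point mass, swap the maxima over $x$ and $y$, and identify $\max_y D(\omega^{x,y}_B\Vert\mathcal{S}(|y\rangle\!\langle y|_A))$ with $D(\mathcal{N}^x\Vert\mathcal{S})$. The only difference is that you prove this last identity yourself (data processing under $\operatorname{Tr}_Y$ plus the direct-sum property) where the paper cites \cite[Lemma 25]{WBHK20}; you should also drop the phrase about a channel that ``appends a register $Y$,'' since no such channel acts on the outputs, and the partial-trace formulation you give next is the valid argument.
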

\begin{proof}
    As discussed in the text before Corollary~\ref{cor:cq_read_cap_ub}, in the absence of energy constraints, the upper bound on the reading capacity of a set of jointly classical-quantum channels given in Proposition~\ref{prop:cq_channel_read_cap_ub} can be written as follows:
    \begin{align}
        R\!\left(\mathcal{N}\right) &\le \inf_{\mathcal{S}_{A\to B}} \sup_{\rho_{A}} \sum_{y\in \mathcal{Y}}\langle y|_A\rho_{A}|y\rangle_A\max_{x\in \mathcal{X}}D\!\left(\mathcal{N}^x_{A\to B}(|y\rangle\!\langle y|_A)\middle\Vert\mathcal{S}_{A\to B}\!\left(|y\rangle\!\langle y|_A\right)\right)\\
        &= \inf_{\mathcal{S}_{A\to B}} \sup_{\left\{p(y)\right\}_{y\in \mathcal{Y}}} \sum_{y\in \mathcal{Y}}p(y)\max_{x\in \mathcal{X}}D\!\left(\mathcal{N}^x_{A\to B}(|y\rangle\!\langle y|_A)\middle\Vert\mathcal{S}_{A\to B}\!\left(|y\rangle\!\langle y|_A\right)\right)\\
        &= \inf_{\mathcal{S}_{A\to B}} \sup_{y\in \mathcal{Y}} \max_{x\in \mathcal{X}}D\!\left(\mathcal{N}^x_{A\to B}(|y\rangle\!\langle y|_A)\middle\Vert\mathcal{S}_{A\to B}\!\left(|y\rangle\!\langle y|_A\right)\right)\\
        &= \inf_{\mathcal{S}_{A\to B}} \max_{x\in \mathcal{X}} \sup_{y\in \mathcal{Y}} D\!\left(\mathcal{N}^x_{A\to B}(|y\rangle\!\langle y|_A)\middle\Vert\mathcal{S}_{A\to B}\!\left(|y\rangle\!\langle y|_A\right)\right)\\
        &= \inf_{\mathcal{S}_{A\to B}} \max_{x\in \mathcal{X}} D\!\left(\mathcal{N}^x_{A\to B}\middle\Vert\mathcal{S}_{A\to B}\right),
    \end{align}
    where the first equality follows from the fact that  $p(y) \coloneqq \langle y\rho_A|y\rangle$ is a probability distribution for every state $\rho_A$ and the final equality holds because the supremum over states in the definition of Umegaki relative entropy of channels can be replaced with the supremum over the basis states for jointly classical-quantum channels, as can be seen from~\cite[Lemma 25]{WBHK20}. 
\end{proof}

\medskip

In the absence of energy constraints, the lower bound on the reading capacity given in Proposition~\ref{theo:read_cap_lb} can also be written as the information radius bound obtained in Corollary~\ref{cor:cq_read_cap_ub} for jointly classical-quantum channels. This implies that the upper bound on the reading capacity of jointly classical-quantum channels given in Corollary~\ref{cor:cq_read_cap_ub} is indeed tight. We state this result formally in Theorem~\ref{theo:cq_read_cap_eq} below.

\begin{theorem}\label{theo:cq_read_cap_eq}
    Let $\mathcal{N}\coloneqq \left\{\mathcal{N}^x_{A\to B}\right\}_{x\in \mathcal{X}}$ be a set of jointly classical-quantum channels. Then, the reading capacity of $\mathcal{N}$ is given by the following:
    \begin{equation}\label{eq:cq_read_cap_eq}
        R\!\left(\mathcal{N}\right) = \inf_{\mathcal{S}_{A\to B}}\max_{x\in \mathcal{X}}D\!\left(\mathcal{N}^x_{A\to B}\middle\Vert\mathcal{S}_{A\to B}\right),
    \end{equation}
    where the infimum is over every channel $\mathcal{S}_{A\to B}$ that is jointly classical-quantum with $\mathcal{N}$. Moreover, this rate is achieved by a non-adaptive protocol.
\end{theorem}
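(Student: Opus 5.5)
The upper bound $R(\mathcal{N}) \le \inf_{\mathcal{S}}\max_{x} D(\mathcal{N}^x\Vert\mathcal{S})$ is exactly Corollary~\ref{cor:cq_read_cap_ub}, so the plan is to prove the matching lower bound using the non-adaptive rate of Proposition~\ref{theo:read_cap_lb} with $H=0$. For the probe I will take $\psi_{RA}=\sum_{y,y'}\sqrt{q(y)q(y')}|y\rangle\!\langle y'|_R\otimes|y\rangle\!\langle y'|_A$, where $q$ is a distribution on $\mathcal{Y}$. Because every $\mathcal{N}^x$ first measures $A$ in the common basis $\{|y\rangle\}$, we get
\begin{equation}
\mathcal{N}^x(\psi_{RA})=\sum_y q(y)|y\rangle\!\langle y|_R\otimes\omega^{x,y}_B .
\end{equation}
By the direct-sum property~\eqref{eq:Umegaki_direct_sum}, the objective of Proposition~\ref{theo:read_cap_lb} then equals
\begin{equation}
F(p,q)\coloneqq\sum_{x,y}p(x)q(y)\,D\!\left(\omega^{x,y}_B\Vert\overline{\omega}^{y}_B\right),\qquad \overline{\omega}^y\coloneqq\sum_x p(x)\omega^{x,y}.
\end{equation}
It therefore suffices to show
\begin{equation}
\sup_{p,q}F(p,q)\ \ge\ \inf_{\{\sigma^y\}}\max_{y}\max_{x}D\!\left(\omega^{x,y}\Vert\sigma^y\right).
\end{equation}
The right-hand side is the information radius appearing in the proof of Corollary~\ref{cor:cq_read_cap_ub}, once we note that a jointly cq channel $\mathcal{S}$ is the same thing as a family of states $\{\sigma^y\}$.

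The key step is to recognize $F$ as a Holevo information over a product alphabet. Define the joint input $(x,y)$ with distribution $r(x,y)=p(x)q(y)$ and signal states $|y\rangle\!\langle y|\otimes\omega^{x,y}$. The quantity $F$ is then the mutual information $I(X;YB)$ of the classical--quantum state $\sum_{x,y}p(x)q(y)|x\rangle\!\langle x|\otimes|y\rangle\!\langle y|\otimes\omega^{x,y}$. Equivalently, it is the Holevo information of the ensemble $\{r(x,y),\,|y\rangle\!\langle y|\otimes\omega^{x,y}\}$ minus $H(Y)$, since $Y$ is independent of $X$.

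For the minimax I will invoke the quantum radius--capacity identity (Ohya--Petz--Watanabe / Schumacher--Westmoreland): for a finite family of states $\{\rho^z\}$,
\begin{equation}
\sup_{r}\sum_z r(z)D(\rho^z\Vert\overline{\rho}_r)=\inf_\sigma\max_z D(\rho^z\Vert\sigma).
\end{equation}
I expect this to be the main obstacle, because it cannot be applied directly to the full product family. The independence constraint $r=p\otimes q$ does not allow arbitrary joint distributions, and the $H(Y)$ offset spoils the identification. The plan is therefore to handle $y$ and $x$ separately, in two steps.

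\begin{enumerate}
\item For each fixed $y$, apply the radius--capacity identity to the family $\{\omega^{x,y}\}_x$. This produces a distribution $p_y$ achieving $C_y\coloneqq\inf_{\sigma}\max_x D(\omega^{x,y}\Vert\sigma)$. Let $y^*$ be an index with $C_{y^*}=\max_y C_y$; this maximum is exactly the right-hand side above, since the infimum over $\{\sigma^y\}$ decouples across $y$.
\item Choose $q=\delta_{y^*}$ and $p=p_{y^*}$. Then $F(p,q)=C_{y^*}$, which gives the lower bound.
\end{enumerate}

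Proving the radius--capacity identity itself requires a minimax theorem. This is justified by Sion's theorem, since $\sum_x p(x)D(\omega^x\Vert\sigma)$ is linear in $p$ and convex and lower semicontinuous in $\sigma$. One then uses the golden identity
\begin{equation}
\sum_x p(x)D(\omega^x\Vert\sigma)=\sum_x p(x)D(\omega^x\Vert\overline{\omega})+D(\overline{\omega}\Vert\sigma)
\end{equation}
to evaluate the inner infimum over $\sigma$ at $\sigma=\overline{\omega}$.

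Non-adaptivity is automatic from this construction. The rate is achieved by probing every cell with the same product state: here $\psi$ collapses to $|y^*\rangle\!\langle y^*|_A$, so no reference system is even needed. Decoding is by the HSW joint measurement, as in Proposition~\ref{theo:read_cap_lb}. Combining this achievable rate with Corollary~\ref{cor:cq_read_cap_ub} gives the equality in~\eqref{eq:cq_read_cap_eq}.
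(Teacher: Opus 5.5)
Your proposal is correct and follows essentially the paper's route: the non-adaptive HSW lower bound of Proposition~\ref{theo:read_cap_lb}, the golden identity combined with Sion's theorem to turn the Holevo quantity into an information radius, restriction to basis-state probes $|y\rangle$, and decoupling of the infimum over $\{\sigma^y\}$ across $y$. The only differences are the order of steps and one unnecessary detour. You fix the probe $|y^*\rangle$ before invoking the minimax, through a per-$y$ radius--capacity identity, whereas the paper applies Sion for a general $\psi$ and only afterward restricts to $|0\rangle\!\langle 0|_R\otimes|y\rangle\!\langle y|_A$. The product-alphabet detour is not needed.
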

\begin{proof}
    We only need to prove the achievability part because the converse was shown in Corollary~\ref{cor:cq_read_cap_ub}.

    First, note that the following equality holds for a fixed alphabet $\mathcal{X}$, every set of channels $\left\{\left\{\mathcal{N}^x\right\}_{x\in \mathcal{X}},\mathcal{S}\right\}$, every probability distribution $\left\{p(x)\right\}_{x\in \mathcal{X}}$, and every state $\psi$:
    \begin{align}
        &\sum_{x\in \mathcal{X}}p(x)D\!\left(\mathcal{N}^x(\psi)\middle\Vert\mathcal{S}(\psi)\right)\notag\\ &= -\sum_{x\in \mathcal{X}}p(x)\!\left[S\!\left(\mathcal{N}^x(\psi)\right) + \operatorname{Tr}\!\left[\mathcal{N}^x(\psi)\log_2\!\left(\mathcal{S}(\psi)\right)\right]\right]\\
        &= -\sum_{x\in \mathcal{X}}p(x)S\!\left(\mathcal{N}^x(\psi)\right) - \operatorname{Tr}\!\left[\overline{\mathcal{N}}(\psi)\log_2\!\left(\mathcal{S}(\psi)\right)\right]\\
        &= -\sum_{x\in \mathcal{X}}p(x)S\!\left(\mathcal{N}^x(\psi)\right) - \operatorname{Tr}\!\left[\overline{\mathcal{N}}(\psi)\log_2\!\left(\mathcal{S}(\psi)\right)\right] - S\!\left(\overline{\mathcal{N}}(\psi)\right) + S\!\left(\overline{\mathcal{N}}(\psi)\right)\\
        &= D\!\left(\overline{\mathcal{N}}(\psi)\middle\Vert\mathcal{S}(\psi)\right) - \sum_{x\in \mathcal{X}}p(x)S\!\left(\mathcal{N}^x(\psi)\right) + S\!\left(\overline{\mathcal{N}}(\psi)\right)\\
        &= D\!\left(\overline{\mathcal{N}}(\psi)\middle\Vert\mathcal{S}(\psi)\right) + \sum_{x\in\mathcal{X}}p(x)D\!\left(\mathcal{N}^x(\psi)\middle\Vert\overline{\mathcal{N}}(\psi)\right),
    \end{align}
    where $\overline{\mathcal{N}} \coloneqq \sum_{x\in \mathcal{X}}p(x)\mathcal{N}^x$. As such,
    \begin{equation}
        \sum_{x\in \mathcal{X}}p(x)D\!\left(\mathcal{N}^x(\psi)\middle\Vert\mathcal{S}(\psi)\right) \ge \sum_{x\in\mathcal{X}}p(x)D\!\left(\mathcal{N}^x(\psi)\middle\Vert\overline{\mathcal{N}}(\psi)\right)
    \end{equation}
    for every channel $\mathcal{S}$, and the inequality is saturated when $\mathcal{S} = \overline{\mathcal{N}}$. Therefore,
    \begin{equation}\label{eq:donald_id_inf}
        \inf_{\mathcal{S}}\sum_{x\in \mathcal{X}}p(x)D\!\left(\mathcal{N}^x(\psi)\middle\Vert\mathcal{S}(\psi)\right) = \sum_{x\in\mathcal{X}}p(x)D\!\left(\mathcal{N}^x(\psi)\middle\Vert\overline{\mathcal{N}}(\psi)\right).
    \end{equation}
    If $\mathcal{N}^x$ are jointly classical-quantum channels, then it suffices to restrict the infimum in~\eqref{eq:donald_id_inf} to an infimum over channels that are jointly classical-quantum with $\left\{\mathcal{N}^x\right\}$. This is because the optimal channel $\mathcal{S} = \overline{\mathcal{N}}$ is a convex combination of classical-quantum channels, and hence, it itself is a classical-quantum channel. Substituting~\eqref{eq:donald_id_inf} into~\eqref{eq:read_cap_lb} and removing the energy constraints, we arrive at the following inequality:
    \begin{equation}
        R\!\left(\mathcal{N}\right) \ge \sup_{\left\{p(x)\right\}_{x\in \mathcal{X}}}\sup_{\psi_{RA}} \inf_{\mathcal{S}_{A\to B}}\sum_{x\in \mathcal{X}}p(x)D\!\left(\mathcal{N}^x_{A\to B}\!\left(\psi_{RA}\right)\middle\Vert\mathcal{S}_{A\to B}\!\left(\psi_{RA}\right)\right).
    \end{equation}
    The argument of the above optimization is linear in $p(x)$ and convex in $\mathcal{S}_{A\to B}$, where the second claim follows from the joint-convexity of Umegaki relative entropy. We can also restrict the infimum over every channel $\mathcal{S}$ to an infimum over every channel such that $\operatorname{supp}(\Gamma^{\mathcal{N}^x}_{AB}) \subseteq \operatorname{supp}(\Gamma^{\mathcal{S}}_{AB})$ for every $x\in \mathcal{X}$ to ensure that the relative entropy terms are finite. Now we can use Sion's minimax~\cite{Sion58} theorem to arrive at the following inequality:
    \begin{align}
        R(\mathcal{N}) &\ge \sup_{\psi_{RA}} \inf_{\mathcal{S}_{A\to B}}\sup_{\left\{p(x)\right\}_{x\in \mathcal{X}}}\sum_{x\in \mathcal{X}}p(x)D\!\left(\mathcal{N}^x_{A\to B}\!\left(\psi_{RA}\right)\middle\Vert\mathcal{S}_{A\to B}\!\left(\psi_{RA}\right)\right)\\
        &= \sup_{\psi_{RA}} \inf_{\mathcal{S}_{A\to B}}\max_{x\in \mathcal{X}}D\!\left(\mathcal{N}^x_{A\to B}\!\left(\psi_{RA}\right)\middle\Vert\mathcal{S}_{A\to B}\!\left(\psi_{RA}\right)\right).\label{eq:read_cap_cq_lb_sup_inf_sup}
    \end{align}
    Let 
    \begin{align}
        \mathcal{N}^x_{A\to B}(\cdot) &= \sum_{y\in \mathcal{Y}}\langle y|_A(\cdot)|y\rangle_A \otimes \omega^{x,y}_B,\\
        \mathcal{S}_{A\to B}(\cdot) &= \sum_{y\in \mathcal{Y}}\langle y|_A(\cdot)|y\rangle_A \otimes \sigma^y_B.
    \end{align}
    Then,
    \begin{align}
        R(\mathcal{N}) &\ge \sup_{\psi_{RA}}\inf_{\left\{\sigma^y_B\right\}_{y\in \mathcal{Y}}}\max_{x\in \mathcal{X}}D\!\left(\sum_{y\in \mathcal{Y}}\langle y|_A\psi_{RA}|y\rangle_A\otimes \omega^{x,y}_B\middle\Vert \sum_{y\in \mathcal{Y}}\langle y|_A\psi_{RA}|y\rangle_A\otimes \sigma^{y}_B\right)\\
        &\ge \sup_{y\in \mathcal{Y}}\inf_{\left\{\sigma^y_B\right\}_{y\in \mathcal{Y}}}\max_{x\in \mathcal{X}}D\!\left( \omega^{x,y}_B\middle\Vert  \sigma^{y}_B\right),\label{eq:cq_read_cap_lb_proof}
    \end{align}
    where the last inequality follows by restricting the supremum over all pure states $\psi_{RA}$ to supremum over states of the form $|0\rangle\!\langle 0|_R\otimes |y\rangle\!\langle y|_A$. Note that each element of the set $\left\{\sigma^{y}_B\right\}_{y\in \mathcal{Y}}$ can be optimized separately for each $y\in \mathcal{Y}$. Thus, we can interchange the supremum over the alphabet $\mathcal{Y}$ and the infimum over all sets of states $\left\{\sigma^{y}_B\right\}_{y\in \mathcal{Y}}$ in~\eqref{eq:cq_read_cap_lb_proof} to arrive at the following inequality:
    \begin{align}
        R(\mathcal{N}) &\ge \inf_{\left\{\sigma^y_B\right\}_{y\in \mathcal{Y}}}\sup_{y\in \mathcal{Y}}\max_{x\in \mathcal{X}}D\!\left( \omega^{x,y}_B\middle\Vert  \sigma^{y}_B\right)\\
        &= \inf_{\left\{\sigma^y_B\right\}_{y\in \mathcal{Y}}}\max_{x\in \mathcal{X}}\sup_{y\in \mathcal{Y}}D\!\left( \omega^{x,y}_B\middle\Vert  \sigma^{y}_B\right)
    \end{align}
    Finally, noting that $\sup_{y\in \mathcal{Y}}D\!\left( \omega^{x,y}_B\middle\Vert  \sigma^{y}_B\right) = D\!\left(\mathcal{N}^x_{A\to B}\middle\Vert\mathcal{S}_{A\to B}\right)$ (see~\cite[Lemma 25]{WBHK20}) and the fact that an infimum over the set $\left\{\sigma^{y}_B\right\}_{y\in \mathcal{Y}}$ is equivalent to an infimum over channels $\mathcal{S}_{A\to B}$ that are jointly classical-quantum with $\mathcal{N}$, we arrive at the following inequality:
    \begin{equation}
        R(\mathcal{N}) \ge \inf_{\mathcal{S}_{A\to B}}\max_{x\in \mathcal{X}}D\!\left(\mathcal{N}^x_{A\to B}\middle\Vert\mathcal{S}_{A\to B}\right).
    \end{equation}
    Combining the above inequality with the statement of Corollary~\ref{cor:cq_read_cap_ub}, we arrive at~\eqref{eq:cq_read_cap_eq}. Since we started with the lower bound in Proposition~\ref{theo:read_cap_lb}, which is achieved by a non-adaptive strategy, we conclude the statement of Theorem~\ref{theo:cq_read_cap_eq}. 
\end{proof}

\medskip

\section{Numerical simulations}\label{sec:numerics}

In this section, we demonstrate the utility of our bounds through numerical simulations.

\subsection{Qubit unitary channels}

For our first example, we consider the following qubit unitary channels described in~\cite{LP17}:
\begin{equation}\label{eq:LP_channels}
    \mathcal{N}^1_{A\to B}(\cdot) = \operatorname{id}_{A\to B}(\cdot) \qquad \mathcal{N}^2_{A\to B}(\cdot) = \begin{pmatrix}
        1 & 0\\
        0 & e^{i\theta}
    \end{pmatrix}(\cdot) \begin{pmatrix}
        1 & 0\\
        0 & e^{-i\theta}
    \end{pmatrix},
\end{equation}
where $\mathcal{N}^2_{A\to B}$ is parameterized by $\theta$. This example is of particular importance in quantum reading literature because the authors of~\cite{LP17} used this example to demonstrate that probe states entangled across channel inputs can yield a strictly higher reading rate than probe states that are separable across the input channels. In particular, they showed that the reading capacity of this ensemble when using only separable probe states in a parallel protocol, which we call the single-cell probe state protocol, is given by the following expression:
\begin{equation}\label{eq:single_cell_probe_LP}
    R^1\!\left(\mathcal{N}\right) = h_2\!\left(\sin^2\!\left(\frac{\theta}{4}\right)\right),
\end{equation}
where $R^1$ denotes that the reading capacity is computed by allowing only single-cell probe states. 

On the other hand, they showed that there exists a probe state entangled across the inputs of two channels in a parallel protocol, which we call the two-cell probe state protocol, that yields a reading rate equal to $\frac{1}{2}h_2\!\left(\sin^2\!\left(\frac{\theta}{2}\right)\right)$. As such,
\begin{equation}\label{eq:2_cell_probe_st_LP}
    R\!\left(\mathcal{N}\right) \ge \frac{1}{2}h_2\!\left(\sin^2\!\left(\frac{\theta}{2}\right)\right),
\end{equation}
which is strictly larger than $R^1(\mathcal{N})$ for a range of the parameter $\theta$.

\begin{figure}
    \centering
    \includegraphics[width=0.7\linewidth]{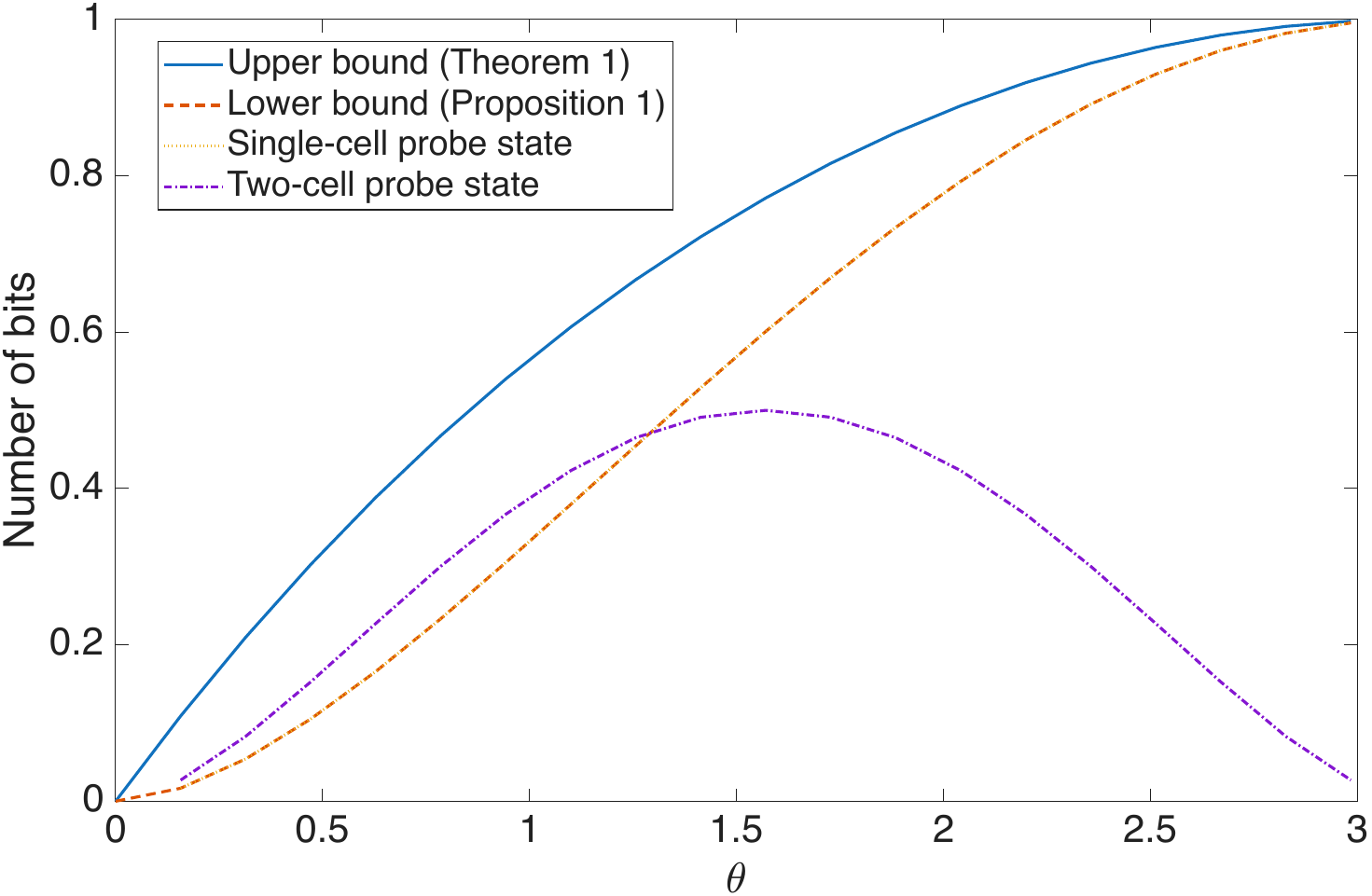}
    \caption{Comparison of upper and lower bounds on the unconstrained reading capacity of the pair of qubit unitary channels defined in~\eqref{eq:LP_channels}. The lower bound computed using the bilinear program in Appendix~\ref{app:LB_BP} overlaps with the single-cell probe state bound given in~\eqref{eq:single_cell_probe_LP}.}
    \label{fig:unconstrained_LP}
\end{figure}

In Figure~\ref{fig:unconstrained_LP}, we plot our upper bounds from Theorem~\ref{theo:ec_read_cap_ub} and lower bounds from~\Cref{theo:read_cap_lb} for the pair of channels defined in~\eqref{eq:LP_channels}. For this plot, we consider the unconstrained setting where probe states of arbitrary energy are allowed. We compare our bounds with the reading capacity of this ensemble of channels achieved by only single-cell probe states given in~\eqref{eq:single_cell_probe_LP} and the lower bound on the reading capacity of this ensemble of channels given in~\eqref{eq:2_cell_probe_st_LP}. To compute our lower bounds from~\Cref{theo:read_cap_lb}, we run the bilinear program given in Appendix~\ref{app:LB_BP} and use the state $\psi_{RA}$ and probability distribution $\left\{p(x)\right\}_{x\in \{1,2\}}$ generated by the bilinear program to compute $I(X;RB)_{\omega}$ given in~\eqref{eq:DW_lb}. We find that our lower bounds from~\Cref{theo:read_cap_lb} overlap with the single-cell probe bound given in~\eqref{eq:single_cell_probe_LP}.

\subsection{Loss--dephasing channels}

Two common sources of noise in photonic systems are the loss of photons and phase damping. These two effects are encapsulated in a single noise model by loss--dephasing channels. The channel is described by two parameters: $\eta$ dictating the transmissivity of photons, and $\gamma$ dictating the strength of phase-damping noise. For simplicity, we directly present the Choi operator of the loss--dephasing channel below, and we refer the reader to~\cite[Appendix G]{HLSW26} for the derivation of the Choi operator:
\begin{equation}
    \Gamma^{\eta,\gamma}_{RB} = \sum_{m,n=0}^{\infty}\sum_{k=0}^{\min(m,n)}
\sqrt{\binom{m} { k}\binom{n} { k}}\eta^{\frac{1}{2}(m+n)-k}(1-\eta)^{k} e^{-\frac{\gamma}{2}(m-n)^2} |m\rangle\!\langle n|_R\otimes\ket{m-k}\!\bra{n-k}_B,
\end{equation}
where $\eta \in [0,1]$ and $\gamma \in [0,\infty)$.

For numerical calculations, we truncate the Choi operator of the loss--dephasing channel to a finite dimensional operator. Such truncation is expected to induce an error in the computation of entropic quantities that involve the Choi operators. However, numerical evidence suggests that low-dimensional approximations of the Choi operator may suffice to produce ``good'' approximations of the actual values of the entropic quantities in question for small values of the energy budget $E$ (see, for example, the discussion around Fig.~5 in~\cite{HLSW26}). Moreover, we note that the purpose of the numerical simulations for loss--depahsing channels is to demonstrate the difference in the upper and lower bounds on the energy-constrained reading capacity presented in this work rather than give precise upper and lower bounds on the energy-constrained reading capacity of these bosonic channels.

\begin{figure}
    \centering
    \includegraphics[width=0.7\linewidth]{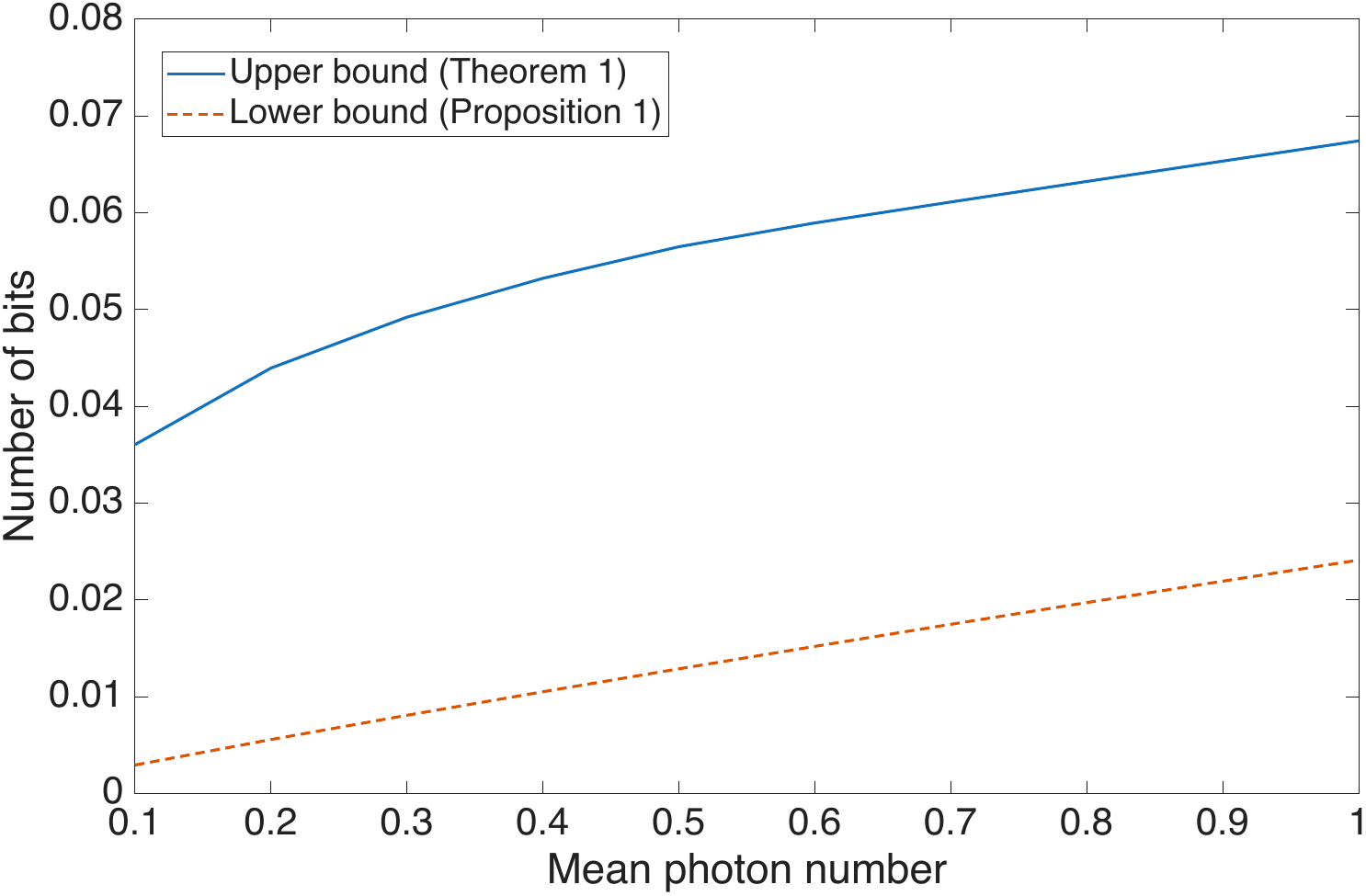}
    \caption{Upper and lower bounds on the energy constrained reading capacity of a pair of loss--dephasing channels plotted against the average photon number per channel probe. In this example, the first channel has the parameters $\eta_1 = 0.95, \gamma_1 = 0.01$, and the second channel has the parameters $\eta_2 = 0.85, \gamma_2 = 0.01$. The output dimension of the channel is truncated to four.}
    \label{fig:loss-dephasing}
\end{figure}

In Figure~\ref{fig:loss-dephasing}, we compare the upper and lower bounds on the energy-constrained reading capacity of a pair of loss--dephasing channels, with parameters $(\eta_1 = 0.95,\gamma_1 = 0.01)$ and $(\eta_2 = 0.85,\gamma_2 = 0.01)$, respectively. The Hamiltonian is chosen to be the photon number operator so that $\operatorname{Tr}\!\left[H\rho\right]$ represents the mean photon number. The upper bounds are computed using the Theorem~\ref{theo:ec_read_cap_ub} (see Appendix~\ref{app:UB_SDP} for the explicit semidefinite program). To compute the lower bounds, we use the bilinear program given in Appendix~\ref{app:LB_BP} to find ``good'' estimates of the probe state $\psi_{RA}$ and the probability distribution $\left\{p(x)\right\}_{x\in \{0,1\}}$ that are then used to evaluate the lower bound given in~\Cref{theo:read_cap_lb}.

\section{Discussion}

\textbf{Summary: }We have developed complementary upper and lower bounds on the energy-constrained reading capacity of a finite set of channels.  The upper bound can be computed efficiently using a semidefinite program. In the absence of energy constraints, the upper bound can be expressed as the information radius of the set of channels under the Belvakin--Staszewski realtive entropy. 

We express the lower bound on the energy-constrained quantum reading capacity as a bilinear program. We demonstrate via numerical simulations that this bilinear program can be used to obtain a ``good'' set of parameters to evaluate a relaxation of the lower bound obtained in~\cite{DW19}, which, on its own, is not efficiently computable with respect to the dimensions of the channels.

For jointly classical-quantum channels, our upper bounds reduce to a simpler form involving Umegaki relative entropy. In the absence of energy constraints, our upper bounds coincide with the lower bounds for jointly classical-quantum channels, establishing an exact expression for the reading capacity of a finite set of jointly classical-quantum channels as the information radius of the set of channels under the Umegaki relative entropy. Our results agree with the findings of~\cite{Pascual26,PW26} that the reading capacity of a set of jointly classical-quantum channels can be reached by non-adaptive strategies.

\textbf{Future directions: }Moving forward, there are several directions that can be investigated. In the absence of energy constraints, our upper bound can be expressed as the information radius of the set of channels used in the quantum reading task. It would be illuminating to see if upper and lower bounds in the energy-constrained setting can be expressed as the information radius of the channels with respect to some energy-constrained divergences.

Our methods allow us to demonstrate the optimality of non-adaptive protocols for quantum reading with jointly classical-quantum channels in the absence of energy constraints. However, it is unclear if the adaptive protocols can achieve a higher capacity than every non-adapative protocol for jointly classical-quantum channels in the presence of energy constraints.

It will also be interesting to find more classes of channels for which adaptive protocols are sufficient to achieve the quantum reading capacity associated with a set of such channels.

\section*{Acknowledgement}

We thank Ludovico Lami for fruitful discussions, especially regarding the state-dependent chain rule for the Belavkin--Staszewski relative entropy from \cite{HLSW26}.

We also thank V\'ictor Pascual Abraldes and Andreas Winter for sharing an early version of their work, which helped us identify that, in agreement with their results, our techniques also arrive at the conclusion that non-adaptive protocols achieve the quantum reading capacity for jointly classical-quantum channels.

ZH is supported by an ARC DECRA Fellowship (DE230100144) ``Quantum-enabled super-resolution
imaging'' and an RMIT Vice Chancellor’s Senior Research Fellowship.
MMW acknowledges support from the National Science Foundation under Grant No.~2329662.

\bibliography{Ref}
\bibliographystyle{alphaurl}

\appendix

\section{Semidefinite program for an upper bound on the energy-constrained quantum reading capacity}\label{app:UB_SDP}

In this section, we present the semidefinite program to compute the upper bound on the energy-constrained reading capacity of a set of channels given in Theorem~\ref{theo:ec_read_cap_ub}.

It was shown in~\cite{FS19} that the set of tuples $(X,Y,\Theta)$ that satisfy $\Theta \le -\widehat{D}_{\operatorname{op}}\!\left(X,Y\right)$ can be approximated by the following set of semidefinite constraints described for some positive integers $m$ and $k$:
\begin{equation}
    \label{eq:K_cone}
    \left\{\begin{array}{c}
         \Theta:  \\
         \frac{1}{\ln(2)}\sum_{j=1}^m w_jT_j = 2^{-k}\Theta, \quad Z_0 = Y,\\
         \left[\begin{array}{cc}
       X  & Z_{i+1} \\
        Z_{i+1} & Z_i
    \end{array}\right]\ge 0 \qquad \forall i\in \{0,1,\ldots,k-1\},\\
    \left[\begin{array}{cc}
        Z_k - X - T_j & -\sqrt{t_j}T_j \\
         -\sqrt{t_j}T_j & X -t_jT_j
    \end{array}\right] \ge 0 \qquad \forall j\in \{1,2,\ldots,m\}
    \end{array}\right\},
\end{equation}
where $\left\{T_j\right\}_{j=1}^{m}$ and $\left\{Z_i\right\}_{i=0}^{k}$ are sets of Hermitian operators, and $\left\{w_j\right\}_{j=1}^m$ and $\left\{t_j\right\}_{j=1}^m$ are the weights and nodes respectively, of the $m$-point Gauss-Legendre quadrature on the interval $[0,1]$. The accuracy of the approximation improves with increasing $m$ and $k$, and the approximation is tight in the limit $m,k\to \infty$. 

Substituting the semidefinite approximation of $\Theta \le -\widehat{D}_{\operatorname{op}}\!\left(X,Y\right)$ in~\eqref{eq:ec_read_cap_ub}, which is equivalent to the condition $-\Theta \ge \widehat{D}_{\operatorname{op}}\!\left(X,Y\right)$, we arrive at the following semidefinite programming approximation of the upper bound in Theorem~\ref{theo:ec_read_cap_ub}:
\begin{multline}
    R_{H,E}\!\left(\mathcal{N}\right) \lesssim\\ \inf\left\{\begin{array}{c}
         \lambda + \mu E:  \\
         -\left(\operatorname{Tr}_B\!\left[\Theta^x_{AB}\right]\right)^T \le \lambda I + \mu H \qquad \forall x\in \mathcal{X},\\
         \frac{1}{\ln(2)}\sum_{j=1}^m w_jT^x_j = 2^{-k}\Theta^x, \quad Z^x_0 = S \qquad \forall x\in \mathcal{X},\\
         \left[\begin{array}{cc}
        N^x & Z^x_{i+1} \\
        Z^x_{i+1} & Z^x_i
    \end{array}\right]\ge 0 \qquad \forall i\in \{0,1,\ldots,k-1\}, x\in \mathcal{X},\\
    \left[\begin{array}{cc}
        Z^x_k - N^x - T^x_j & -\sqrt{t_j}T^x_j \\
         -\sqrt{t_j}T^x_j & N^x -t_jT^x_j
    \end{array}\right] \ge 0 \qquad \forall j\in \{1,2,\ldots,m\}, x\in \mathcal{X},\\
    \operatorname{Tr}_B\!\left[S_{AB}\right] = I_A
    \end{array}\right\},
\end{multline}
where $N^x_{AB}$ are the Choi operators of the channels in the set $\mathcal{N}$ and the infimum is over $\lambda \in \mathbb{R},\mu\ge 0$, Hermitian operators $\left\{\Theta^x_{AB}\right\}_{x\in \mathcal{X}}$, $\left\{Z^x_i\right\}_{i\in \{0,1,\ldots,k\},x\in \mathcal{X}}$, and $\left\{T^x_j\right\}_{j\in \{1,2,\ldots,m\},x\in \mathcal{X}}$, and positive semidefinite operator $S_{AB} \ge 0$.

\section{Bilinear program for a lower bound on the energy-constrained quantum reading capacity}\label{app:LB_BP}

In this section, we obtain a bilinear programming lower bound on the quantum reading capacity using~\Cref{theo:read_cap_lb}.

In~\cite[Appendix A]{KS26}, the authors showed that for every set $\left\{t_k\right\}_{k=1}^r$ such that $\mu = t_1 \le t_2 \le \cdots \le t_r = \lambda$, the following inequality holds:
\begin{equation}
    \int_{\mu}^{\lambda}\frac{ds}{s}\operatorname{Tr}_+\!\left[\sigma s - \rho\right] \ge \sum_{k=1}^{r-1}\sup_{0\le P_k\le I}\operatorname{Tr}\!\left[P_k\!\left(\sigma \beta_k + \rho\alpha_k\right)\right],
\end{equation}
where $\alpha_k \coloneqq \ln\frac{t_k}{t_{k+1}}$ and $\beta_k \coloneqq t_{k+1} - t_k$ and $\mu \sigma \le \rho \le \lambda\sigma$.

Note that every bipartite pure state $\psi_{RA}$ can be written as $\psi_{RA}\coloneqq X_R\Gamma_{RA}X^{\dagger}_{R}$, where $X_R$ is a linear operator with $\operatorname{Tr}\!\left[X^{\dagger}_RX_R\right] = 1$, where $\Gamma_{RA} \coloneqq \sum_{i,j = 0}^{d-1}|i\rangle\!\langle j|_R\otimes |i\rangle\!\langle j|_A$ is the unnormalized maximally entangled operator of Schmidt rank $d$. Let  $\mathcal{N}_{A\to B}$ and $\mathcal{M}_{A\to B}$ be arbitrary channels with $\operatorname{supp}\!\left(\Gamma^{\mathcal{N}}\right) \subseteq \operatorname{supp}\!\left(\Gamma^{\mathcal{M}}\right)$, and let $\psi_{RA}$ be an arbitrary pure state. Then,
\begin{align}
    &\int_{\mu}^{\lambda}\frac{ds}{s}\operatorname{Tr}_+\!\left[\mathcal{M}_{A\to B}\!\left(\psi_{RA}\right) s - \mathcal{N}_{A\to B}\!\left(\psi_{RA}\right)\right]\notag\\ &\ge \sum_{k=1}^{r-1}\sup_{0\le P^k\le I}\operatorname{Tr}\!\left[P^k_{RB}\!\left(\mathcal{M}_{A\to B}\!\left(\psi_{RA}\right) \beta_k + \mathcal{N}_{A\to B}\!\left(\psi_{RA}\right)\alpha_k\right)\right]\\
    &= \sum_{k=1}^{r-1}\sup_{0\le P^k\le I}\operatorname{Tr}\!\left[P^k_{RB}\!\left(\mathcal{M}_{A\to B}\!\left(X_R\Gamma_{RA}X^{\dagger}_R\right) \beta_k + \mathcal{N}_{A\to B}\!\left(X_R\Gamma_{RA}X^{\dagger}_R\right)\alpha_k\right)\right]\\
    &= \sum_{k=1}^{r-1}\sup_{0\le P^k\le I}\operatorname{Tr}\!\left[X^{\dagger}_RP^k_{RB}X_R\!\left(\mathcal{M}_{A\to B}\!\left(\Gamma_{RA}\right) \beta_k + \mathcal{N}_{A\to B}\!\left(\Gamma_{RA}\right)\alpha_k\right)\right]\\
    &= \sum_{k=1}^{r-1}\sup_{0\le P^k\le I}\operatorname{Tr}\!\left[X^{\dagger}_RP^k_{RB}X_R\!\left(\Gamma^{\mathcal{M}}_{RB} \beta_k + \Gamma^{\mathcal{N}}_{RB}\alpha_k\right)\right],\label{eq:int_rep_w_P}
\end{align}
where the penultimate equality follows from the cyclicity of trace and the final equality follows from the definition of Choi operators of channels given in~\eqref{eq:Choi_op_defn}. Let us define $\Omega^k_{RB} \coloneqq X^{\dagger}_R P^k_{RB} X_R$. Since $0\le P^k_{RB} \le I_{RB}$, we have $0 \le \Omega^k_{RB} \le X^{\dagger}_RX_R \otimes I_B$. Now let us define $\rho \coloneqq X^{\dagger}_RX_R$, which is a unit-trace positive semidefinite operator. We can then rewrite~\eqref{eq:int_rep_w_P} as follows:
\begin{equation}
    \int_{\mu}^{\lambda}\frac{ds}{s}\operatorname{Tr}_+\!\left[\mathcal{M}_{A\to B}\!\left(\psi_{RA}\right) s - \mathcal{N}_{A\to B}\!\left(\psi_{RA}\right)\right] \ge \sum_{k=1}^{r-1}\sup_{0\le \Omega^k\le \rho}\operatorname{Tr}\!\left[\Omega^k_{RB}\!\left(\Gamma^{\mathcal{M}}_{RB} \beta_k + \Gamma^{\mathcal{N}}_{RB}\alpha_k\right)\right],\label{eq:int_rep_w_omega}
\end{equation}
where the operator $\rho_R$ is determined by $\psi_{RA}$.

Let us get back to the lower bound on the quantum reading capacity given in~\Cref{theo:read_cap_lb}. Let $N^x_{AB}$ be the Choi operator of the channel $\mathcal{N}^x_{A\to B}$ for every $x\in \mathcal{X}$. Let $\overline{N}_{AB}$ be the Choi operator of the average channel $\sum_{x\in \mathcal{X}}p(x)\mathcal{N}^x_{A\to B}$ with respect to some probability distribution $p(x)$. Then using~\eqref{eq:int_rep_w_omega} and~\eqref{eq:rel_ent_int_rep}, we can write
\begin{multline}
    D\!\left(\mathcal{N}^x_{A\to B}\!\left(\psi_{RA}\right)\middle\Vert\overline{\mathcal{N}}_{A\to B}\!\left(\psi_{RA}\right)\right) \ge \\ \frac{1}{\ln 2}\!\left[\sum_{k=1}^{r-1}\sup_{0\le \Omega^{x,k}\le \rho}\operatorname{Tr}\!\left[\Omega^{x,k}_{AB}\!\left(\overline{N}_{AB} \beta_k + N^x_{AB}\alpha_k\right)\right] + \ln \lambda + 1 - \lambda\right] \qquad \forall x\in \mathcal{X},
\end{multline}
where $\mu$ and $\lambda$ are chosen such that $\mu \overline{N} \le N^x \le \lambda \overline{N}$ for every $x\in \mathcal{X}$, and $\alpha_k$ and $\beta_k$ are defined accordingly by choosing a fixed discretization $\mu = t_1\le t_2 \le \cdots \le t_r = \lambda$ for every $x\in \mathcal{X}$.

Now, consider the following quantity:
\begin{multline}
    \sum_{x\in \mathcal{X}}p(x)D\!\left(\mathcal{N}^x_{A\to B}\!\left(\psi_{RA}\right)\middle\Vert\overline{\mathcal{N}}_{A\to B}\!\left(\psi_{RA}\right)\right) \ge \\ \frac{1}{\ln 2}\!\left[\sum_{x\in \mathcal{X}}\sum_{k=1}^{r-1}\sup_{0\le \Omega^{x,k}\le \rho}\operatorname{Tr}\!\left[p(x)\Omega^{x,k}_{AB}\!\left(\overline{N}_{AB} \beta_k + N^x_{AB}\alpha_k\right)\right] + \ln \lambda + 1 - \lambda\right].
\end{multline}
Let us define $Z^{x,k}_{AB}\coloneqq p(x)\Omega^{x,k}_{AB}$. The condition $0\le \Omega^{x,k}_{AB}\le \rho_A\otimes I_B$ translates to $0\le Z^{x,k}_{AB} \le p(x)\rho_A\otimes I_B$ for every $x\in \mathcal{X}$. As such,
 \begin{multline}
    \sum_{x\in \mathcal{X}}p(x)D\!\left(\mathcal{N}^x_{A\to B}\!\left(\psi_{RA}\right)\middle\Vert\overline{\mathcal{N}}_{A\to B}\!\left(\psi_{RA}\right)\right) \ge \\ \frac{1}{\ln 2}\!\left[\sum_{x\in \mathcal{X}}\sum_{k=1}^{r-1}\sup_{0\le Z^{x,k}\le p(x)\rho}\operatorname{Tr}\!\left[Z^{x,k}_{AB}\!\left(\overline{N}_{AB} \beta_k + N^x_{AB}\alpha_k\right)\right] + \ln \lambda + 1 - \lambda\right].
\end{multline}
Now we fix $\mu$ and $\lambda$, and substitute the above inequality in~\eqref{eq:read_cap_lb} with an additional constraint $N^x \le \lambda\overline{N}$ for every $x\in \mathcal{X}$ to ensure the Umegaki relative entropy terms are finite. It may not be possible to find a positive $\mu$ such that $\mu\overline{N} \le N^x$. However, one can start with a small positive value $\mu = \delta$ that incurs an error of at most $\frac{\delta}{\ln 2}$. We refer the reader to~\cite{KS26} for a thorough discussion of this case.

The energy constraint in~\eqref{eq:read_cap_lb} can be reformulated to be expressed as  a constraint on $\rho$ as follows: 
\begin{align}
    \operatorname{Tr}\!\left[\left(I_R\otimes H_A\right)\psi_{RA}\right] &= \operatorname{Tr}\!\left[\left(I_R\otimes H_A\right)X_R\Gamma_{RA}X^{\dagger}_R\right]\\
    &= \operatorname{Tr}\!\left[\left(I_R\otimes H_A\right)X^T_A\Gamma_{RA}X^{*}_A\right]\label{eq:transpose_trick_app}\\
    &= \operatorname{Tr}\!\left[H_A\!\left(X^{\dagger}_AX_A\right)^T\right]\label{eq:Gamma_tr_id}\\
    &= \operatorname{Tr}\!\left[H_A\rho^T_A\right],
\end{align}
where $X^*$ denotes the complex conjugate of $X$. The equality in~\eqref{eq:transpose_trick_app} follows from the transpose trick and the equality in~\eqref{eq:Gamma_tr_id} follows from the fact that the marginal of the unnormalized maximally entangled operator on either system is the identity operator. 
The inequality in~\eqref{eq:read_cap_lb} can now be written as follows:
\begin{multline}\label{eq:read_cap_lb_opt}
    R_{H,E}(\mathcal{N})\ge \frac{1}{\ln 2}\!\left(\ln \lambda + 1 - \lambda\right)\\ +\sup_{\left\{p(x)\right\}_{x\in \mathcal{X}}}\sup_{\substack{\rho\ge 0,\\ \left\{Z^{x,k}\right\}_{x\in \mathcal{X},k\in \{1,2,\ldots,r-1\}}\ge 0}}\left\{\begin{array}{c}
         \frac{1}{\ln 2}\!\left[\sum_{x\in \mathcal{X}}\sum_{k=1}^{r-1}\operatorname{Tr}\!\left[Z^{x,k}_{AB}\!\left(\overline{N}_{AB} \beta_k + N^x_{AB}\alpha_k\right)\right] \right]:\\
         \operatorname{Tr}[\rho] = 1, \operatorname{Tr}\!\left[H_A\rho^T_A\right] \le E,\\
         0\le Z^{x,k}_{AB} \le p(x)\rho_A\otimes I_B \quad \forall x\in \mathcal{X},k\in \{1,2,\ldots,r-1\},\\
         \mu\overline{N} \le N^x \le \lambda\overline{N}\quad \forall x\in \mathcal{X}
    \end{array}\right\}.
\end{multline}
Since we have fixed $\mu$ and $\lambda$, the supremum over the probability distributions may not be able to span all probability distributions without violating the condition $\mu\overline{N} \le N^x \le \lambda\overline{N}$. Nonetheless, for every fixed $\mu$ and $\lambda$, the resulting optimization yields a lower bound on the energy-constrained quantum reading capacity. It must be noted that in the case where $\operatorname{supp}\!\left(N^x_{AB}\right) = \operatorname{supp}\!\left(N^y_{AB}\right)$ for every $x,y\in \mathcal{X}$, there exist finite $\mu$ and $\lambda$ such that the condition $\mu\overline{N} \le N^x \le \lambda\overline{N}$ is satisfied for every probability distribution $p(x)$ and every $x\in\mathcal{X}$.

For a fixed probability distribution $\left\{p(x)\right\}_{x\in \mathcal{X}}$, the optimization in~\eqref{eq:read_cap_lb_opt} is a semidefinite program with variables $\rho$ and $\left\{Z^{x,k}_{AB}\right\}_{x\in \mathcal{X},k\in \{1,2,\ldots,r-1\}}$. For a fixed choice of $\rho$ and $\left\{Z^{x,k}_{AB}\right\}_{x\in \mathcal{X},k\in \{1,2,\ldots,r-1\}}$, the optimization is a semidefinite program in $\left\{p(x)\right\}_{x\in \mathcal{X}}$. The see-saw method (referred to as the alternate convex search in~\cite[Section 4.2.1]{GPK07}) can then provide an efficient way to find feasible points that yield a tighter lower bound on the energy-constrained quantum reading capacity than the ones obtained by guessing ``good'' values of $\left\{p(x)\right\}_{x\in \mathcal{X}}$ and $\psi_{RA}$ in~\eqref{eq:read_cap_lb}. Note that every feasible point of the optimization in~\eqref{eq:read_cap_lb_opt} yields a valid lower bound on the energy-constrained reading capacity of the set of channels $\mathcal{N}$. Moreover, the program also yields a value of $\rho$, which tells us the probe state one must use to achieve the reading rate obtained by the optimization in~\eqref{eq:read_cap_lb_opt}. In fact, one can use these values of input probe state and probability distribution to evaluate the Holevo information in~\Cref{theo:read_cap_lb}, which gives a tighter bound than the output of~\eqref{eq:read_cap_lb_opt} since it does not suffer from relaxations induced by discretization of the integral in~\eqref{eq:rel_ent_int_rep}.
\end{document}